\def\dOi{11(3:20)2015}
\theoremstyle{plain}\newtheorem{theorem}[thm]{Theorem}
\theoremstyle{plain}\newtheorem{corollary}[thm]{Corollary}
\theoremstyle{plain}\newtheorem{example}[thm]{Example}
\theoremstyle{plain}\newtheorem{lemma}[thm]{Lemma}
\theoremstyle{plain}\newtheorem{remark}[thm]{Remark}
\def\eg{{\em e.g.}}
\def\cf{{\em cf.}}
\newcommand{\problemx}[3]{
\vspace{3mm}
\par\noindent{\bf#1}\par\nobreak\vskip.2\baselineskip
\begingroup\clubpenalty10000\widowpenalty10000
\setbox0\hbox{\bf INPUT: }\setbox1\hbox{\bf QUESTION: }
\dimen0=\wd0\ifnum\wd1>\dimen0\dimen0=\wd1\fi
\vskip-\parskip\noindent
\hbox to\dimen0{\box0\hfil}\hangindent\dimen0\hangafter1\ignorespaces#2\par
\vskip-\parskip\noindent
\hbox to\dimen0{\box1\hfil}\hangindent\dimen0\hangafter1\ignorespaces#3\par
\endgroup\vspace{3mm}
}
\newcommand{\Z}{\mathbb{Z}}
\newcommand{\N}{\mathbb{N}}
\newcommand{\RP}{\mathbb{R}_{\geq 0}}
\newcommand{\B}{\mathcal{B}}
\newcommand{\mtl}{MTL }
\newcommand{\reg}{\mathsf{reg}}
\newcommand{\locs}{\mathcal{L}}
\newcommand{\edges}{E}
\newcommand{\clocks}{\mathcal{X}}
\newcommand{\clock}{x}
\newcommand{\loc}{\mathit{l}}
\newcommand{\cmax}{\mathsf{cmax}}
\newcommand{\TW}{T\Sigma^+}
\newcommand{\true}{\mathtt{true}}
\newcommand{\U}{\mathsf{U}}
\newcommand{\sep}{\text{ }|\text{ }} 
\renewcommand{\succ}{\mathsf{Succ}}
\renewcommand{\frac}{\mathsf{frac}}
\renewcommand{\int}{\mathsf{int}}
\newcommand*\ie{\textit{i.e.}}
\newcommand{\enc}{\mathsf{enc}} 
\newcommand{\A}{\mathcal{A}}
\newcommand{\tcm}{\mathcal{M}}
\newcommand{\cm}{\mathcal{C}}
\newcommand{\Op}{\mathsf{Op}}
\newcommand{\op}{\mathsf{op}}
\newcommand{\gs}{\mathcal{G}} 
\newcommand{\hs}{\mathcal{H}} 
\newcommand*{\defeq}{\mathrel{\rlap{%
                     \raisebox{0.3ex}{$\m@th\cdot$}}%
                     \raisebox{-0.3ex}{$\m@th\cdot$}}%
                     =}
\newcommand*{\ndefeq}{\mathrel{\rlap{%
                     \raisebox{0.3ex}{$\m@th\cdot$}}%
                     \raisebox{-0.3ex}{$\m@th\cdot$}%
                     \rlap{%
                     \raisebox{0.3ex}{$\m@th\cdot$}}
                     \raisebox{-0.3ex}{$\m@th\cdot$}}
                     =}
\newcommand{\transA}{\Rightarrow_{\hspace{-.8mm}\A}}
\newcommand{\transAB}{\Rightarrow_{\hspace{-.8mm}\A\B}}
\begin{document}

	\title[Timed Automata with Unbounded Discrete Data Structures]{Verification for Timed Automata extended with Unbounded Discrete Data Structures}

\author[K.~Quaas]{Karin Quaas}	
\address{Universit\"at Leipzig, Germany}	
\email{quaas@informatik.uni-leipzig.de}
\thanks{The author is supported by DFG, project QU 316/1-1.}	





\keywords{Timed automata, real-time systems, counter systems, pushdown automata, Metric Temporal Logic}

\titlecomment{{\lsuper*}An extended abstract was published in \emph{CONCUR 2014}.}


\begin{abstract}
  \noindent We study decidability of verification problems for timed automata extended with unbounded discrete data structures. 
	More detailed, we extend timed automata with a pushdown stack. 
	In this way, we obtain a strong model that may for instance be used to model real-time programs with procedure calls. 
	It is long known that the reachability problem for this model is decidable. 
	The goal of this paper is to identify subclasses of timed pushdown automata for which the language inclusion problem and related problems are decidable.
\end{abstract}

\maketitle

\section{Introduction}
\emph{Timed automata} were introduced by Alur and Dill~\cite{AD94}, and have since then become a popular standard formalism to model real-time systems. 
An undeniable reason for the success of timed automata is the $\mathsf{PSPACE}$ decidability of the \emph{language emptiness problem}~\cite{AD94}.
A major drawback of timed automata is the undecidability~\cite{AD94} of the \emph{language inclusion problem}: given two timed automata $\A$ and $\B$, does $L(\A)\subseteq L(\B)$ hold?
The undecidability of this problem prohibits the usage of automated verification algorithms for analysing timed automata, where $\B$ can be seen as the specification that is supposed to be satisfied by the system modelled by $\A$. 
However, if $\B$ is restricted to have at most one clock, then the language inclusion problem over finite timed words is decidable (albeit with non-primitive recursive complexity)~\cite{DBLP:conf/lics/OuaknineW04}.
As other important milestones in the success story of timed automata we would like to mention the decidability of bisimulation~\cite{DBLP:conf/cav/Cerans92}, 
the decidability of the \emph{model checking problem} for timed automata and a timed extension of CTL~\cite{DBLP:conf/lics/HenzingerNSY92}, and, more recently,  the decidability of the model checking problem for timed automata and Metric Temporal Logic (MTL, for short) over finite timed words~\cite{DBLP:conf/lics/OuaknineW05}.

Timed automata can express many interesting time-related properties, and even with the restriction to a single clock, they  allow one to model a large class of systems, including, for example, the internet protocol TCP~\cite{TCP}.  
If we want to reason about real-time programs with procedure calls, or about the number of events occurring in computations of real-time systems, we have to extend the model of timed automata with unbounded discrete data structures. In 1994, Bouajjani et al.~\cite{DBLP:conf/hybrid/BouajjaniER94} extended timed automata with discrete counters and a pushdown stack, and proved that the satisfiability of reachability properties for several subclasses of this model is decidable. 
Nine years later, it was shown that the binary reachability relation for \emph{timed pushdown systems} is decidable~\cite{DBLP:journals/tcs/Dang03}.  Decidability of the reachability problem was also proved for several classes of \emph{timed counter systems}~\cite{DBLP:journals/entcs/BouchyFS09}, mainly by simple extensions of the classical region-graph construction~\cite{AD94}. 
The language inclusion problem, however, is to the best of our knowledge only considered in~\cite{EmmiM06} for the class of timed pushdown systems. 
In~\cite{EmmiM06} it is stated that the language inclusion problem is decidable if $\A$ is a timed pushdown automaton, and $\B$ is a one-clock timed automaton. 
The proof is based on an extension of the proof for the decidability of the language inclusion problem for the case that $\A$ is a timed automaton without pushdown stack~\cite{DBLP:conf/lics/OuaknineW04}. 
Unfortunately, and as is well known, the proof in~\cite{EmmiM06} is not correct. 

In this paper, 
we prove that different to what is claimed in~\cite{EmmiM06}, the language inclusion problem for the case that $\A$ is a  pushdown timed automaton and $\B$ is a one-clock timed automaton is undecidable. This is even the case if $\A$ is a deterministic instance of a very restricted subclass of timed pushdown automata called \emph{timed visibly one-counter nets}.
On the other hand,
we prove that the language inclusion problem is decidable if $\A$ is a timed automaton and $\B$ is a timed automaton extended with a finite set of counters that can be incremented and decremented, and which we call \emph{timed counter nets}. As a special case, we obtain the decidability of the \emph{universality problem} for timed counter nets: given a timed automaton $\B$ with input alphabet $\Sigma$, does $L(\B)$ accept the set of all timed words over $\Sigma$?
Finally, we give the precise decidability border for the universality problem by proving that the universality problem is undecidable for the class of \emph{timed visibly one-counter automata}. 
We remark that all results apply to extensions of timed automata over \emph{finite} timed words.

\section{Extensions of Timed Automata with Discrete Data Structure}
We use $\Z$, $\N$ and  $\RP$ to denote the integers, the non-negative integers and  the non-negative reals, respectively.

We use $\Sigma$ to denote a finite alphabet. A {\em timed word} over $\Sigma$ is a non-empty finite sequence $(a_1,t_1)\dots(a_k,t_n)\in(\Sigma\times\RP)^+$ such that the sequence $t_1,\dots,t_n$ of timestamps is non-decreasing.
We say that a timed word is \emph{strictly monotonic} if $t_{i-1}<t_{i}$ for every $i\in\{2,\dots,n\}$. 
We use $\TW$ to denote the set of finite timed words over $\Sigma$.
A set $L\subseteq \TW$ is called a {\em timed language}.

Let $\clocks$ be a finite set of {\em clock variables} ranging over $\RP$.
We define {\em  clock constraints} $\phi$ over $\clocks$ to be conjunctions of formulas of the form $\clock\sim c$, where $\clock\in \clocks$, $c\in\N$, and $\sim\in\{<,\leq,=,\geq,>\}$. 
We may use $\true$ as abbreviation for $x\le c\vee x>c$. 
We use $\Phi(\clocks)$ to denote the set of all clock constraints over $\clocks$.
For the case that $\clocks$ is the empty set, we set $\Phi(\clocks)=\{\true\}$. 
A \emph{clock valuation} is a mapping from $\clocks$ to $\RP$. 
A clock valuation $\nu$ satisfies a clock constraint $\phi$, written $\nu\models\phi$, if $\phi$ evaluates to true according to the values given by $\nu$.
For $\delta\in\RP$ and $\lambda\subseteq \clocks$, we define $\nu+\delta$ to be $(\nu+\delta)(\clock)=\nu(\clock)+\delta$ for each $\clock\in \clocks$, and we define $\nu[\lambda\defeq 0]$ by $(\nu[\lambda\defeq 0])(x)=0$ if $x\in\lambda$, and $(\nu[\lambda\defeq 0])(x)=\nu(x)$ otherwise.

Let $\Gamma$ be 
a finite stack alphabet. We use $\Gamma^*$ to denote the set of finite words over $\Gamma$, including the empty word denoted by $\varepsilon$. 
We define a finite set $\Op(\Gamma)$ of \emph{stack operations} by $\Op(\Gamma)\defeq\{pop(a),push(a)\mid a\in\Gamma\}\cup\{noop,empty?\}$.

A \emph{timed pushdown automaton} is a tuple
$\A=(\Sigma, \Gamma, \locs, \locs_0, \locs_f, \clocks, \edges)$, where
\begin{itemize}
\item $\locs$ is a finite set of \emph{locations},
\item $\locs_0\subseteq\locs$ is the set of initial locations,
\item $\locs_f\subseteq \locs$ is the set of accepting locations,
\item $\edges \subseteq \locs\times\Sigma\times\Phi(\clocks)\times\Op(\Gamma) \times 2^\clocks\times\locs$ is a finite set of edges. 
\end{itemize}	
A \emph{state} of $\A$ is a triple $(\loc,\nu,u)$, where $\loc\in\locs$ is the current location, the clock valuation $\nu$ represents the current values of the clocks, and $u\in\Gamma^*$ represents the current stack content, where the top-most symbol of the stack is the left-most 
symbol in the word $u$, and the empty word $\varepsilon$ represents the empty stack. 
We use $\gs^\A$ to denote the set of all states of $\A$.
A timed pushdown automaton $\A$ induces a  transition relation $\transA$ on $\gs^\A\times\RP\times\Sigma\times\gs^\A$ as follows:
$\langle (\loc,\nu,u),\delta,a,(\loc',\nu',u')\rangle\hspace{-1.6mm}\in\transA$, if, and only if, there exists some edge $(\loc,a,\phi,\op,\lambda,\loc')\in\edges$ 
such that 
$(\nu+\delta)\models\phi$,
$\nu'=(\nu+\delta)[\lambda\defeq 0]$, 
and
(i) if $\op=pop(\gamma)$ for some $\gamma\in\Gamma$, 
then $u=\gamma\cdot u'$;
(ii) if $\op=push(\gamma)$ for some $\gamma\in\Gamma$,
then  $u'=\gamma\cdot u$;
(iii) if $\op=empty?$, then $u=u'=\varepsilon$;
(iv) if $\op=noop$, then $u'=u$. 
A \emph{run} of $\A$ is a finite sequence 
$\prod_{1\leq i\leq n} \langle (\loc_{i-1},\nu_{i-1},u_{i-1}),\delta_i,a_i,(\loc_i,\nu_i,u_i)\rangle$ such that
$\langle(\loc_{i-1},\nu_{i-1},u_{i-1}),\delta_i,a_i,(\loc_i,\nu_i,u_i)\rangle\hspace{-1.2mm}\in\transA$ 
for every $i\in\{1,\dots,n\}$. 
A run is called \emph{successful} if $\loc_0\in\locs_0$, $\nu_0(x)=0$ for every $x\in\clocks$, $u_0=\varepsilon$, and $\loc_n\in\locs_f$.
With a run we associate the timed word $(a_1,\delta_1)(a_2,\delta_1+\delta_2)\dots(a_n,\Sigma_{1\leq i\leq n}\delta_i)$. 
The language accepted by the timed pushdown automaton $\A$, denoted by $L(\A)$, is defined to be the set of timed words $w\in\TW$ for which there exists a successful run of $\A$ that $w$ is associated with.

Next we define some subclasses of timed pushdown automata; see Figure 1 for a graphical overview.
We start with timed extensions of \emph{one-counter automata}~\cite{DLS-tcs10,DBLP:journals/iandc/JancarKMS04} and \emph{one-counter nets}~\cite{DBLP:conf/fsttcs/HofmanLMT13,DBLP:conf/concur/AbdullaC98}. 
A \emph{timed one-counter automaton} is a timed pushdown automaton where the stack alphabet is a singleton. 
By writing $push$ and $pop$ we mean that we increment and decrement the counter, respectively, whereas $empty?$ corresponds to a zero test.  
A \emph{timed one-counter net} is a timed one-counter automaton without zero tests, \ie, the $empty?$ operation is not allowed.
We remark that for both classes, the execution of an edge of the form $(\loc,a,\phi,pop,\lambda,\loc')$ is \emph{blocked} if the stack is empty.

Next, we consider the timed extension of an interesting subclass of pushdown automata called \emph{visibly pushdown automata}~\cite{DBLP:conf/stoc/AlurM04}. 
A \emph{timed visibly pushdown automaton}
is a timed pushdown automaton for which the input alphabet $\Sigma$ can be partitioned into three pairwise disjoint sets $\Sigma=\Sigma_{\mathsf{int}}\cup\Sigma_{\mathsf{call}}\cup\Sigma_{\mathsf{ret}}$ of \emph{internal}, \emph{call}, and \emph{return} input symbols, respectively, and such that
for every edge $(\loc,a,\phi,\op,\lambda,\loc')$
the following conditions are satisfied:
\begin{itemize}
\item $a\in\Sigma_{\mathsf{int}}$ if, and only if,  $\op=noop$,
\item $a\in\Sigma_{\mathsf{call}}$, if, and only if,  $\op=push(b)$ for some $b\in\Gamma$,
\item $a\in\Sigma_{\mathsf{ret}}$ if, and only if,  $\op=empty?$ or $\op=pop(b)$ for some $b\in\Gamma$. 	
\end{itemize}
A \emph{timed visibly one-counter automaton} (timed visibly one-counter net, respectively) is a timed one-counter automaton (timed one-counter net, respectively) that is also a timed visibly pushdown automaton.
We say that a timed visibly one-counter net with no clocks is \emph{deterministic} if 
for all $e=(\loc,a,\true,\op',\emptyset,\loc')$, $e'=(\loc,a,\true,\op'',\emptyset,\loc'')\in \edges$  with $e\neq e'$ we  have either $\op'=pop$ and $\op''=empty?$, or
$\op'=empty?$ and $\op''=pop$. 

Finally, we define the class of \emph{timed counter nets}, which generalizes timed one-counter nets, but is not a subclass of timed pushdown automata. 
A timed counter net of dimension $n$ is a tuple 
$\A=(\Sigma,n,\locs,\locs_0,\locs_f,\clocks, \edges)$, where 
$\locs,\locs_0,\locs_f$ are the sets of locations, initial locations and accepting locations, respectively, and 
$\edges\subseteq \locs\times\Sigma\times\Phi(\clocks)\times\{0,1,-1\}^n\times 2^\clocks\times\locs$ is a finite set of edges. 
A state of a timed counter net is a triple $(\loc,\nu,\vec{v})$, where $\loc\in\locs$, $\nu$ is a clock valuation, and $\vec{v}\in\N^n$ is a vector representing the current values of the counters. We define $\langle(\loc,\nu,\vec{v}),\delta,a,(\loc',\nu',\vec{v'})\rangle\hspace{-1.6mm}\in\transA$ if, and only if, there exists some edge $(\loc,a,\phi,\vec{c},\lambda,\loc')\in\edges$ such that $(\nu+\delta)\models\phi$, $\nu'=(\nu+\delta)[\lambda\defeq 0]$, and $\vec{v'}=\vec{v}+\vec{c}$, where  vector addition is defined pointwise.  Note that, similar to pop operations on an empty stack, transitions which result in the negative value of one of the counters are \emph{blocked}. 
The notions of \emph{runs}, \emph{successful runs}, \emph{associated timed words} and \emph{the language accepted by $\A$}, are defined analogously to the corresponding definitions for timed pushdown automata.

\begin{figure}
\begin{center}
\begin{picture}(123,52)(0,-52)

\node[Nw=37.0,Nh=8.5,Nmr=0.0](n0)(68.0,-9.0){}
\put(52,-8){One-clock timed }
\put(51,-12){pushdown automata}

\node[Nw=41.0,Nh=8.5,Nmr=0.0](n1)(18.0,-9){}

\put(-2,-8){One-clock timed visibly}
\put(1,-12){pushdown automata}

\node[Nw=41.0,Nh=8.5,Nmr=0.0,linewidth=0.5](n2)(18.0,-26.0){}
\put(-2,-25){One-clock timed visibly}
\put(-.5,-29){one-counter automata}

\node[Nw=37.0,Nh=8.5,Nmr=0.0](n3)(68.0,-26.0){}
\put(53,-25){One-clock timed}
\put(49.5,-29){one-counter automata}

\node[Nw=41.0,Nh=8.5,Nmr=0.0,linewidth=0.5](n4)(18.0,-43.0){}
\put(-2,-42){One-clock timed visibly}
\put(3,-46){one-counter nets}

\node[Nw=37.0,Nh=8.5,Nmr=0.0,linewidth=0.5](n5)(68.0,-43.0){}
\put(53,-42){One-clock timed}
\put(53,-46){one-counter nets}

\node[Nw=31.0,Nh=8.5,Nmr=0.0,linewidth=0.5](n6)(113.0,-43.0){}
\put(100,-42){One-clock timed}
\put(103,-46){counter nets}

\node[linecolor=green,Nw=140,Nh=12,Nmr=0.0,linewidth=0.4](nets)(61,-43){}
\node[linecolor=yellow,Nw=96,Nh=32,Nmr=0.0,linewidth=0.4](onecounter)(41,-35){}
\node[linecolor=red,Nw=46,Nh=51,Nmr=0.0,linewidth=0.4](visibly)(18,-28){}




\drawedge(n4,n2){ }

\drawedge(n2,n1){ }

\drawedge(n1,n0){ }

\drawedge(n3,n0){ }

\drawedge(n5,n3){ }

\drawedge(n4,n5){ }

\drawedge(n2,n3){ }

\drawedge(n5,n6){ }

\node[Nw=3.5,Nh=4.0,Nmr=0.0](o)(4,-63.5){A}
\node[Nw=3.5,Nh=4.0,Nmr=0.0](p)(11,-63.5){B}
\drawedge(o,p){ }
\end{picture}

\caption{Extensions of one-clock timed automata with discrete data structures.
	Here, 
	\protect\phantom{$A\to B$} means that $A$ is a subclass of $B$.	The classes surrounded by the red frame are visibly pushdown automata, in which the input determines the stack operations. The yellow framed classes only allow one counter, and the green framed classes do not allow zero tests. 
The language emptiness problem is decidable for all classes,
but only the green framed classes have a decidable universality problem. The corresponding results for classes in boxes with bold line are new and presented in this paper.}
\label{figure_A}
\end{center}
\end{figure}
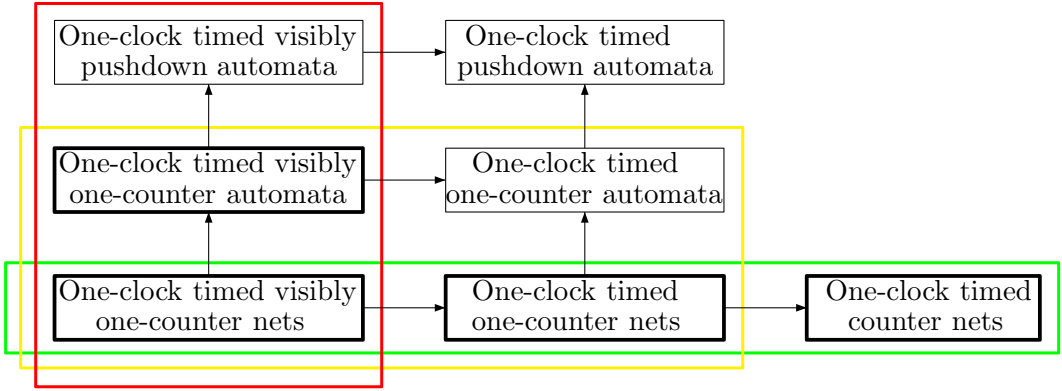

\section{Main Results}
In this section,
we present the main results of the paper.
We are interested in the language inclusion problem $L(\A)\subseteq L(\B)$, where
$\A$ and $\B$ are extensions of timed automata with discrete data structures.
Recall that according to standard notation in the field of verification, in this problem formulation 
$\B$ is seen as the \emph{specification}, and $\A$ is the system that should satisfy this specification, \ie, $\A$ should be a \emph{model} of $\B$.  
	As a special case of this problem, we consider the universality problem, \ie, the question whether $L(\B)=T\Sigma^+$ for a given automaton $\B$.	
	In general, the two problems are undecidable  for timed pushdown automata. 
	This follows on the one hand  from the undecidability of the universality problem for timed automata~\cite{AD94}, 
	and on the other hand from the undecidability of the universality problem for pushdown automata. In fact, it is long known that the universality problem is  undecidable already for non-deterministic one-counter automata~\cite{DBLP:journals/jacm/Greibach69,DBLP:journals/mst/Ibarra79}.

	However, there are interesting decidability results for subclasses of timed pushdown automata:
	The language inclusion problem is decidable if $\A$ is a timed automaton, and $\B$ is a timed automaton with at most one clock~\cite{DBLP:conf/lics/OuaknineW04}. As a special case, the universality problem for timed automata is decidable if only one clock is used. 
	The language inclusion problem is also decidable if
	$\A$ is a one-counter net and $\B$ is a finite automaton, 
	and if $\A$ is a finite automaton and $\B$ is a one-counter net~\cite{DBLP:journals/jcss/JanarEM99}.
	The universality problem for non-deterministic one-counter nets has recently been proved to have non-primitive recursive complexity~\cite{DBLP:conf/rp/HofmanT14}.
	Further we know that the universality and language inclusion problems are decidable if $\A$ and $\B$ are visibly pushdown automata~\cite{DBLP:conf/stoc/AlurM04}.

	Hence it is interesting to consider the two problems for the corresponding subclasses of timed pushdown automata.
	It turns out that the decidability status changes depending on whether the \emph{model} uses a stack (or, more detailed: a counter) or not.
	As a first main result we present:
	\begin{theorem}
		\label{theorem_main_lang_inc}
		The language inclusion problem is undecidable if $\A$ is a timed visibly one-counter net and $\B$ is a timed automaton, even if $\A$ is deterministic and has no clocks, and $\B$ uses at most one clock. 
	\end{theorem}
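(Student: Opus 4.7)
The plan is to reduce the halting problem of deterministic two-counter Minsky machines, which is well-known to be undecidable, to the complement of our language inclusion problem. Given a Minsky machine $M$ with counters $c_1, c_2$, I would encode every halting computation of $M$ as a timed word over an alphabet $\Sigma$ containing a state marker for each control state, an instruction marker for each instruction, counter markers $a$ (for $c_1$) and $b$ (for $c_2$), and block-delimiting symbols. Each configuration $(q_i, c_1^i, c_2^i)$ is represented inside a unit time interval $[i,i+1)$ as a block containing the state marker $q_i$, the instruction marker, and $c_1^i$ events labelled $a$ together with $c_2^i$ events labelled $b$ at pairwise distinct timestamps. The goal is to build $\A$ and $\B$ so that $L(\A)\setminus L(\B)$ coincides with the set of encodings of halting computations of $M$; then $L(\A)\subseteq L(\B)$ if and only if $M$ does not halt.

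The automaton $\A$ will be a deterministic timed visibly one-counter net with no clocks whose job is to generate every syntactically well-formed candidate. I would partition $\Sigma$ so that increment-instruction symbols for $c_1$ are call symbols (pushes), decrement-instruction symbols are return symbols (pops) and every other symbol is internal. Then $\A$'s single counter, driven by the visibility discipline, tracks the current value of $c_1$ exactly along the run, and because $\A$ is a net, a pop is blocked when the counter is $0$, which faithfully enforces the safe-decrement semantics of $c_1$. Using straightforward finite-state control over the instruction sequence and block layout, with all clock guards equal to $\true$, $\A$ accepts exactly the set of untimed well-formed candidates with arbitrary non-decreasing timestamps, and can be made deterministic in the sense defined in the paper.

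The automaton $\B$ will be a non-deterministic one-clock timed automaton that accepts the complement of valid halting computations among all candidates. It will be a finite union of branches, each using the single clock to catch one kind of defect: wrong initial or final state, wrong state successor under an instruction, incorrect block timing, an $a$-event in $B_i$ without its mate at time distance exactly $1$ in $B_{i+1}$ (the standard one-clock matching trick from Ouaknine and Worrell), the analogous violation for $b$-events, or a zero-test instruction appearing in a block that still contains a counter event for the tested counter. Counter $c_2$ and its zero tests are handled entirely by these time-based checks on $b$-events; counter $c_1$ is double-encoded, once in $\A$'s counter and once as the $a$-events in blocks, so that its zero tests can be checked by $\B$ purely from the $a$-event encoding.

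The main obstacle will be showing that this combination pins down the encoding of $c_1$ exactly, despite $\B$ having only a single clock (the classical two-clock construction of Ouaknine and Worrell checks counter matching in both directions). The idea is that $\A$'s counter-net semantics supplies one inequality, namely that its counter cannot go negative, which together with the placement of pushes and pops along the encoded instruction sequence enforces one direction of the match on $a$-events; conversely, $\B$'s one-clock time-matching check forces that no $a$-event from $B_i$ is missing in $B_{i+1}$, supplying the other direction. The two inequalities together pin down the exact number and timing of $a$-events, thereby playing the role of the second clock in the classical proof. Once this soundness-and-completeness argument is in place, $L(\A)\setminus L(\B)$ consists precisely of the valid halting encodings of $M$, yielding the claimed undecidability.
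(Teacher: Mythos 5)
Your high-level architecture is the right one and matches the paper's: reduce an undecidable reachability/halting problem to non-inclusion, let $\A$ generate all well-formed candidates and use the blocked-pop semantics of the net to supply an inequality that a one-clock $\B$ cannot express, and let $\B$ catch all timing defects. You have also correctly identified the central obstacle, namely that a one-clock automaton can only check forward matching and cannot prevent counter events from appearing ``all of a sudden'' (insertion errors). The gap is that your deployment of the single counter does not actually close this obstacle. For $c_1$ your argument is salvageable (the net's counter tracks the instruction-derived value of $c_1$ exactly, so pops block illegal decrements, and the number of $a$-events is bounded \emph{below} by the true value, which makes the zero test sound even though the $a$-events are \emph{not} ``pinned down exactly'' as you claim). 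But $c_2$ is left entirely to the time-based checks, and these cannot exclude spurious $b$-events. A word surviving all of $\B$'s checks may therefore encode a run of $M$ in which $c_2$ is subject to incrementing errors: a spurious $b$ inserted in some block is carried forward and lets the encoding take a decrement-$c_2$ branch, or dodge a zero test, when the true value of $c_2$ is $0$. Hence the right-to-left direction of your reduction fails: a word in $L(\A)\setminus L(\B)$ need not witness a genuine halting computation. This is precisely the error in the Emmi--Majumdar proof that the theorem is meant to correct, and the faulty (gainy) variants of these reachability problems are decidable, so the reduction cannot be repaired by hand-waving.

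The paper avoids this by reducing from control-state reachability for \emph{channel machines}, which have a single unbounded store, and by using the counter \emph{globally} rather than to track a machine counter. Every configuration is padded with wildcard symbols so that, in an error-free computation, all configuration encodings have the same length $n+1$; the encoding rules guarantee that lengths never decrease and that any insertion error strictly increases them from that point on. The net then simply increments to $n+1$ while reading the first configuration, stays idle, and decrements while reading the last one: if any insertion error occurred anywhere, the final configuration is too long and the run blocks. If you want to keep a Minsky-machine source, you would need a comparable global mechanism (or a second counter), because one counter used locally for $c_1$ cannot simultaneously police the $b$-events encoding $c_2$.
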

	We remark that this result corrects Theorem 2 in~\cite{EmmiM06}, in which it is claimed that the language inclusion problem for the case that $\A$ is a timed pushdown automaton and $\B$ is a one-clock timed automaton is decidable.
	The incorrectness of the proof of Theorem 2 in~\cite{EmmiM06}, respectively that of Theorem 1, which the proof for Theorem 2 builds upon,  was already asserted in~\cite{DBLP:conf/concur/ChadhaV07}. Since then the problem of whether language inclusion is decidable or not has been open. 
	
	In contrast to Theorem \ref{theorem_main_lang_inc}, 
	we have decidability for the following classes: 
	\begin{theorem}
		\label{theorem_main_dec}
		The language inclusion problem is decidable with non-primitive recursive complexity if $\A$ is a timed automaton and $\B$ is a one-clock timed counter net.
	\end{theorem}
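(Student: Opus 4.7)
The plan is to reduce the complement problem $L(\A)\not\subseteq L(\B)$ to searching for a bounded-length witness in a well-quasi-ordered transition system, generalising the one-clock subset construction of Ouaknine--Worrell to the counter-net setting. A product configuration is a triple $(q,R,S)$ in which $q$ is a location of $\A$, $R$ is the standard region of $\A$'s clock valuation, and $S$ is a finite multiset of triples $(\loc,\beta,\vec{v})$, where $\loc\in\locs_\B$, $\beta$ is a finite datum encoding the integer part (capped at $\cmax$) of $\B$'s single clock together with the position of its fractional part relative to the fractional parts of $\A$'s clocks determined by $R$, and $\vec{v}\in\N^n$ is the current counter vector. Transitions correspond to time elapse plus $\A$-edges, synchronously updating $q$, $R$, each $\beta$, and the counter vectors according to all possible matching $\B$-edges (increments always enabled; decrements blocked exactly when a component would go negative). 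Setting $\mathcal{F}=\{(q,R,S)\mid q\in\locs_f^\A,\ \loc\not\in\locs_f^\B\text{ for every }(\loc,\beta,\vec{v})\in S\}$, one has $L(\A)\not\subseteq L(\B)$ iff $\mathcal{F}$ is reachable from the initial configuration.

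Next, equip configurations with the quasi-order $\preceq$: $(q_1,R_1,S_1)\preceq(q_2,R_2,S_2)$ iff $q_1=q_2$, $R_1=R_2$, and there exists an injection $\iota:S_1\hookrightarrow S_2$ preserving $\loc$ and $\beta$ and satisfying $\vec{v}\leq\iota(\vec{v})$ componentwise. Since $(q,R)$ ranges over a finite set and each triple $(\loc,\beta,\vec{v})$ lives in a wqo (finite $\times$ finite $\times$ $\N^n$ by Dickson's lemma), the standard multiset extension of Higman's lemma makes $\preceq$ a wqo on product configurations. Two properties of $\preceq$ are central: \emph{monotonicity} of transitions --- if $c\preceq c'$ and $c\to d$, then $c'\to d'$ for some $d'\succeq d$, because every $\B$-successor from an element of $S$ transfers to a $\B$-successor from its $\iota$-image (decrements remain enabled since $\iota$-images have dominating counters, and increments as well as $\beta$-updates are counter-independent); and \emph{downward closure} of $\mathcal{F}$ --- if $(q,R,S_2)\in\mathcal{F}$ and $S_1\preceq S_2$, then $\iota(S_1)\subseteq S_2$ contains no accepting $\loc$, and since $\iota$ preserves $\loc$, neither does $S_1$.

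A minimal-witness argument then decides reachability of $\mathcal{F}$: given a bad run $c_0,c_1,\dots,c_N$ with $c_i\preceq c_j$ for some $i<j$, replay the $\A$-actions of the suffix $c_j\to\cdots\to c_N$ from $c_i$; iterated monotonicity yields a run ending in some $d\preceq c_N$, and downward closure gives $d\in\mathcal{F}$, so a strictly shorter bad run exists. Hence a shortest bad witness contains no such pair, and by the wqo property its length is bounded, yielding decidability with a non-primitive recursive upper bound from the length function of $\preceq$; the matching lower bound is inherited from the already non-primitive recursive case where $\B$ is a one-clock timed automaton~\cite{DBLP:conf/lics/OuaknineW04}. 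The main obstacle is the coherent simultaneous update of the $\beta$-components across \emph{all} elements of $S$ when time elapses or $\A$ resets clocks, so that the injection $\iota$ witnessing $c\preceq c'$ extends canonically to successors. This will be handled by treating $\B$'s fractional clock position per element as a ``virtual clock'' synchronised with $\A$ and applying the standard timed-automaton region-update rules elementwise, exploiting that the $\beta$-datum depends only on $R$ and evolves identically for elements that have received the same sequence of $\B$-edges.
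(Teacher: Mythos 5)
Your overall architecture---a subset construction for $\B$ inside a joint configuration, a well-quasi-order built from Dickson's and Higman's lemmas, compatibility of transitions with the order, downward closure of the bad set, and a shortest-witness/finite-tree termination argument---is essentially the paper's (the paper prunes an unfolding tree rather than arguing about a minimal witness, but this is the same algorithm). The gap is in your abstraction of clock values. You record, per element of $S$, only the integer part of $\B$'s clock and the position of its fractional part \emph{relative to $\A$'s clocks}. This is too coarse: the successor operator applies one common delay $\delta$ to all elements of $S$, so the set of reachable abstract successors depends on the relative order of the fractional parts of the $\B$-clock copies \emph{among themselves}, which your per-element $\beta$ does not determine. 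Concretely, take $\A$ with one clock at value $0$ and compare $\{(\loc_1,0.2,\vec v),(\loc_2,0.4,\vec u)\}$ with $\{(\loc_1,0.4,\vec v),(\loc_2,0.2,\vec u)\}$: both yield the same $(q,R,S)$ in your encoding, yet with a guard $x=1$ on the only $\loc_1$-edge and $x<1$ on the only $\loc_2$-edge, exactly one of the two configurations admits a delay enabling both edges simultaneously, so the two configurations generate different successor supports and can differ on reachability of a bad configuration. Your proposed fix (``elementwise region updates; $\beta$ evolves identically for elements with the same edge history'') does not repair this, since elements whose clocks were last reset at different times have fractional parts whose mutual order is not determined by $R$. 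The paper resolves exactly this by encoding a joint configuration as a \emph{word} of finite sets, ordered by the fractional parts of all clock copies (following Ouaknine--Worrell), and by taking the monotone domination order over the subset order on the letters; Lemma~\ref{lemma_bisim} is precisely the statement that this finer encoding determines the abstract successors.

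A secondary point: you state compatibility in the upward direction ($c\preceq c'$ and $c\to d$ imply $c'\to d'$ with $d\preceq d'$), but your witness argument replays the suffix of the run of the \emph{larger} $c_j$ starting from the \emph{smaller} $c_i$, which requires the downward direction ($c\preceq c'$ and $c'\to d'$ imply $c\to d$ with $d\preceq d'$), i.e., the paper's Lemma~\ref{lemma_dc}. In this particular system the two directions happen to coincide, because $\preceq$-related configurations have identical $\A$-components and the $\B$-part updates functionally in $(\delta,a)$; still, you should state and prove the direction you actually use, as it is also the place where one must argue that blocked decrements merely shrink the successor set and therefore preserve the order.
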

	As a special case of this result (and with the lower bound implied by the corresponding result for one-clock timed automata~\cite{DBLP:journals/fuin/AbdullaDOQW08}), we obtain:
	\begin{corollary}
		\label{corollary_univ_dec}
		The universality problem for one-clock timed counter nets is decidable with non-primitive recursive complexity.
	\end{corollary}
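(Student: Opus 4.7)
The plan is to obtain the corollary as a direct consequence of Theorem~\ref{theorem_main_dec}, combined with the standard reduction of universality to language inclusion against a maximal automaton, plus an obvious embedding that transfers the known non-primitive recursive lower bound.

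For the upper bound, I would reduce universality for one-clock timed counter nets to the language inclusion problem solved by Theorem~\ref{theorem_main_dec}. Given a one-clock timed counter net $\B$ over input alphabet $\Sigma$, construct the trivial timed automaton $\A$ with a single location $\loc$ that is both initial and accepting, no clocks, and one self-loop $(\loc,a,\true,\emptyset,\loc)$ for every $a \in \Sigma$. A direct check against the operational semantics shows $L(\A) = \TW$: for any non-empty timed word $(a_1,t_1)\dots(a_n,t_n)$, firing the $a_i$-self-loop with delay $\delta_i = t_i - t_{i-1}$ (setting $t_0 = 0$) yields a successful run. Consequently $L(\B) = \TW$ if, and only if, $L(\A) \subseteq L(\B)$, and Theorem~\ref{theorem_main_dec} delivers decidability within non-primitive recursive complexity.

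For the matching lower bound, every one-clock timed automaton $\B'$ can be regarded as a one-clock timed counter net of dimension $n = 0$ over the same alphabet, clock, locations, initial and accepting sets, and edges: each edge carries the empty counter-update vector, which trivially preserves non-negativity, so no transition is blocked by the counter. The two models accept the same timed language, hence universality of $\B'$ reduces immediately to universality of the associated timed counter net. The non-primitive recursive lower bound for universality of one-clock timed automata from~\cite{DBLP:journals/fuin/AbdullaDOQW08} therefore transfers verbatim. The corollary is a corollary by design, with the only genuine technical difficulty concentrated in Theorem~\ref{theorem_main_dec}; beyond that, one only needs to verify carefully that the trivial $\A$ has language exactly $\TW$ under the paper's definition of successful runs.
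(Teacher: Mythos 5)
Your proposal is correct and matches the paper's own argument: the corollary is obtained as the special case of Theorem~\ref{theorem_main_dec} where $\A$ is a (trivial) timed automaton accepting all of $\TW$, with the non-primitive recursive lower bound inherited from the universality problem for one-clock timed automata~\cite{DBLP:journals/fuin/AbdullaDOQW08}, which embed into one-clock timed counter nets. The paper leaves these routine details implicit; your write-up simply makes them explicit.
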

	The next two sections are devoted to the proofs of Theorems \ref{theorem_main_lang_inc} and \ref{theorem_main_dec}. 
	We will also give some interesting consequences of these results and their proofs. 
	Amongst others, we prove the undecidability of the model checking problem for timed visibly one-counter nets and MTL over finite timed words, \cf \ the decidability of the same problem for timed automata~\cite{DBLP:conf/lics/OuaknineW05}. 
	After this, in Sect. 5, we will prove the following theorem:
	\begin{theorem}
		\label{theorem_univ_undec}
		The universality problem for one-clock timed visibly one-counter automata is undecidable.
	\end{theorem}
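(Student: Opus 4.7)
We reduce from the halting problem for deterministic two-counter Minsky machines. Given a machine $M$ with counters $c_1, c_2$, we construct a one-clock timed visibly one-counter automaton $\A_M$ over a finite alphabet $\Sigma$ such that $L(\A_M) = T\Sigma^+$ if and only if $M$ does not halt. Since Minsky halting is undecidable this will imply the theorem.

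A run of $M$ is encoded by a timed word in which the $i$-th configuration $(q_i, m_i, n_i)$ occupies the time interval $[i, i+1)$. The visibly partition is fixed as $\Sigma_{\mathrm{call}} = \{a\}$, $\Sigma_{\mathrm{ret}} = \{\bar a, z\}$, and $\Sigma_{\mathrm{int}}$ containing the state symbols, three further instruction symbols $b_+, b_-, z_b$ for $c_2$, a marker $b$, and a separator $\#$. The symbols $a, \bar a, z$ represent the instructions on $c_1$ (increment, decrement, and zero-test); placed once per interval in accordance with the deterministic instruction dictated by $q_i$, they make the stack height always equal to the current value of $c_1$, and an empty? taken on the symbol $z$ enforces $c_1 = 0$ at every attempted zero-test. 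The value $n_i$ of $c_2$ is witnessed by $n_i$ occurrences of the internal marker $b$ at prescribed timestamps in $[i, i+1)$. The automaton $\A_M$ is then defined as a nondeterministic union of gadgets that together detect every way a word can fail to encode a valid halting computation: (i) syntactic malformation, (ii) mismatch between the instruction symbol and the current state (taking into account the currently visible zero-status of $c_2$), (iii) failure of $c_2$ to evolve correctly between $[i, i+1)$ and $[i+1, i+2)$, and (iv) failure ever to reach the halting state $q_h$. For the critical check (iii), the gadget uses its single clock in the standard guess-and-match style: it nondeterministically picks a witness $b$-event in one of the two intervals, resets the clock at it, and verifies that there is (respectively is not) a matching $b$-event at clock value one; taking the union over the two guess directions and the three error modes catches every possible update violation.

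The principal obstacle is the $c_2$-consistency check with only a single clock, while simultaneously respecting the visibly constraint. Since calls and returns can never be ignored, the gadget for (iii) cannot exploit them as auxiliary markers; and since the clock is devoted to the pairing guess, the unit-interval boundaries must be recovered from the internal structural symbols $q_i$ and $\#$. This forces a careful choice of intra-interval timestamps for the $b$-events, together with a structural gadget (i) that already rejects any word using $a, \bar a, z$ outside its designated position, so that the stack semantics of calls and returns cannot block a gadget whose only purpose is to witness a $c_2$-violation. Correctness then follows along standard lines: if $M$ halts, the unique timed word that faithfully encodes its halting computation passes all checks and is not in $L(\A_M)$; if $M$ does not halt, every timed word fails some check and hence lies in $L(\A_M)$.
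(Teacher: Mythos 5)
The critical step of your reduction --- gadget (iii), the $c_2$-consistency check --- has a gap, and it is precisely the gap in the Emmi--Majumdar argument that this paper was written to close. Your guess-and-match gadget handles only the \emph{forward} direction: pick a $b$-event, reset the clock at it, and verify the presence or absence of an event at clock value one --- which necessarily refers to a time \emph{later} than the witness. The ``second guess direction'' does not help: picking the witness $b$ in $[i+1,i+2)$ and looking at clock value one inspects $[i+2,i+3)$, not $[i,i+1)$. What no one-clock gadget can do is detect a $b$-event at some time $t'\in[i+1,i+2)$ that has \emph{no} matching $b$ at time $t'-1$, i.e.\ a marker appearing ``out of nowhere'': to witness this you must certify the absence of an event at the instant $t'-1$ before knowing where $t'$ is, and a single clock that is reset only at events and measures only forward cannot pin down that instant. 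This backward-looking condition is exactly what forces two clocks in the undecidability proof for timed automata~\cite{DBLP:conf/formats/AdamsOW07}, and exactly why the automata $N_{\neg f_r\leftarrow f_c}$ of~\cite{EmmiM06} cannot be built. Consequently your $\A_M$ fails to reject encodings in which $c_2$ spontaneously \emph{increases}; if such a gainy computation reaches $q_h$ while the exact machine does not halt, then $L(\A_M)\neq T\Sigma^+$ although $M$ does not halt, and the reduction breaks.

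The paper circumvents this by making no backward-looking clock check at all. It pads every configuration with wildcard symbols so that a faithful encoding has \emph{constant} length $n+1$ throughout, observes that any insertion error strictly increases the length of all subsequent configurations, and then uses the \emph{counter together with its zero test} to compare the lengths of the first and last configurations: the automaton pushes $n+1$ times at the start, pops on the trailing $-$ symbols, and flags an error exactly when the counter reaches zero before the final $\star$ is read. In your construction the counter and its zero test are spent entirely on simulating $c_1$, so nothing detects spurious increments of the time-encoded counter. (A secondary issue: the visibly discipline forces \emph{every} gadget in the union to push, pop, or zero-test on $a$, $\bar a$, $z$, so a gadget whose only purpose is to witness a $c_2$-violation can be blocked by a pop on the empty stack; this is repairable by adding parallel $empty?$-edges on return symbols, but your appeal to the structural gadget (i) does not address it.)
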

	This is in contrast to the decidability of the universality problem for the two underlying models of one-clock timed automata~\cite{DBLP:conf/lics/OuaknineW04} and visibly one-counter automata, which form a subclass of visibly pushdown automata~\cite{DBLP:conf/stoc/AlurM04}.  
	We also want to point out that this result 
	is stronger than a previous result on the undecidability of the universality problem for one-clock timed visibly pushdown automata (Theorem 3 in~\cite{EmmiM06}), and our proof closes a gap in the proof of Theorem 3 in~\cite{EmmiM06}. 
	Further, we can infer from Corollary \ref{corollary_univ_dec} and Theorem \ref{theorem_univ_undec} the exact decidability border for the universality problem of timed pushdown automata, which lies between timed visibly one-counter nets and timed visibly one-counter automata.

	\section{Undecidability Results}
	In this section, we prove Theorem \ref{theorem_main_lang_inc}. 
	The proof is a reduction of an undecidable  problem for \emph{channel machines}. 
	
	\subsection{Channel Machines}
Let $A$ be a finite alphabet. 
We define the order $\leq$ over the set of finite words over $A$ by 
$a_1 a_2 \dots a_m \leq b_1 b_2 \dots b_n$ if there exists a strictly increasing function $f:\{1,\dots,m\}\to\{1,\dots,n\}$ such that $a_i = b_{f(i)}$ for every $i\in\{1,\dots,m\}$. 

A \emph{channel machine} consists of a finite-state automaton acting on an  unbounded fifo channel. Formally, 
a channel machine is a tuple $\cm=(S,s_I,M,\Delta)$, where
\begin{itemize}
\item $S$ is a finite set of \emph{control states},
	\item $s_I\in S$ is the initial control state,
	\item $M$ is a finite set of \emph{messages},
	\item $\Delta \subseteq S\times L\times S$ is the transition relation over the label set $L=\{!m,?m\mid  m\in M\}\cup\{empty?\}$.
\end{itemize}
Here, $!m$ corresponds to a $send$ operation, $?m$ corresponds to a $read$ operation, and $empty?$ is a test which returns $\true$ if and only if the channel is empty.
Without loss of generality, we assume that $s_I$ does not have any incoming transitions, \ie, $(s,l,s')\in\Delta$ implies $s'\neq s_I$.
Further, we assume that $(s_I,l,s')\in\Delta$ implies $l=empty?$. 
A \emph{configuration} of $\cm$ is a pair $(s,x)$, where $s\in S$ is the control state and $x\in M^*$ represents the contents of the channel.
We use $\hs^\cm$ to denote the set of all configurations of $\cm$. 
The rules in $\Delta$ induce a transition relation $\to_{\cm}$ on $\hs^\cm\times L\times\hs^\cm$ as follows:
\begin{itemize}
\item $\langle (s,x),!m,(s',x')\rangle\hspace{-1.2mm}\in\to_\cm$ if, and only if, there exists some transition $(s,!m,s')\in\Delta$ and $x'=x\cdot m$, \ie, $m$ is added to the tail of the channel.
\item $\langle (s, x),?m,(s',x')\rangle\hspace{-1.2mm}\in\to_\cm$ if, and only if, there exists some transition $(s,?m,s')\in\Delta$ and $x=m\cdot x'$, \ie, $m$ is removed from the head of the channel. 
\item $\langle (s,x),empty?,(s',x')\rangle\hspace{-1.2mm}\in\to_\cm$ if, and only if, there exists some transition $(s,empty?,s')\in\Delta$ and $x=\varepsilon$, \ie, the channel is empty, and $x'=x$. 
\end{itemize}
We may write $(s,x)\buildrel l\over\to_\cm (s',x')$ whenever $\langle(s,x),l,(s',x')\rangle\in\to_\cm$.
Next, we define a second transition relation $\leadsto_\cm$  on $\hs^\cm\times L\times\hs^\cm$. The relation $\leadsto_\cm$ is a superset of $\to_\cm$. It contains some additional transitions which result from \emph{insertion errors}. We define $\langle (s,x_1),l,(s',x'_1)\rangle\hspace{-1.6mm}\in\leadsto_\cm$, if, and only if, there exist $x,x'\in M^*$ such that $x_1\leq x$, $\langle (s,x), l,(s',x')\rangle\hspace{-1.3mm}\in\to_\cm$, and $x'\leq x'_1$.
We may also write $(s,x)\buildrel l \over\leadsto_\cm(s',x')$ whenever $\langle(s,x),l,(s',x')\rangle\in\leadsto_\cm$. 
A \emph{computation} of $\cm$ is a finite sequence $\prod_{1\leq i\leq k} \langle (s_{i-1},x_{i-1}),l_i,(s_i,x_i)\rangle$ such that $\langle (s_{i-1},x_{i-1}),l_i,(s_i,x_i)\rangle\hspace{-1.6mm}\in\leadsto_\cm$ for every $i\in\{1,\dots,k\}$. 
We say that a computation is \emph{error-free} if for all $i\in\{1,\dots,k\}$ we have $\langle (s_{i-1},x_{i-1}),l_i,(s_i,x_i)\rangle\hspace{-1.6mm}\in\to_\cm$.
Otherwise, we say that the computation is \emph{faulty}.

The proof of Theorem \ref{theorem_main_lang_inc}  is a reduction from the following undecidable~\cite{Brand:1983:CFM:322374.322380} control state reachability problem: given a channel machine $\cm$ with control states $S$ and $s_F\in S$, does there exist an error-free computation of $\cm$ from $(s_I,\varepsilon)$ to $(s_F,x)$ for some $x\in M^*$?

We remark that the analogous problem for faulty computations is decidable with non-primitive recursive complexity~\cite{DBLP:journals/fuin/AbdullaDOQW08}: both the lower and upper bound can be proved using corresponding results for \emph{lossy channel machines}~\cite{DBLP:journals/tcs/AbdullaJ03,DBLP:journals/ipl/Schnoebelen02}.

\begin{example}
	\label{example_encoding}
	Define a channel machine $\cm=(S,M,s_I,\Delta)$ by $S=\{s_I,s,s',s_F\}$, $M=\{m_1,m_2,m_3\}$, and $\Delta=\{(s_I,empty?,s),(s,!m_1,s),(s,!m_2,s'),(s',?m_1,s'),(s',?m_3,s_F)\}$. 
	The computation	
	$$\gamma = 
	(s_I,\varepsilon)\buildrel empty?\over {\longrightarrow_\cm}  (s,\varepsilon) \buildrel !m_1 \over \longrightarrow_\cm
	(s,m_1)\buildrel !m_2\over \longrightarrow_\cm
	(s',m_1m_2)\buildrel ?m_1\over\leadsto_\cm
	(s',m_3m_2)\buildrel ?m_3\over\longrightarrow_\cm (s_F,m_2)$$ 
	is faulty due to the last but one transition, where the symbol $m_3$ is an insertion error. 	
	It is easy to see that there exists no error-free computation from $(s_I,\varepsilon)$ to $(s_F,x)$ for some $x$. 
\end{example}

The idea of our reduction is as follows: 
Given a channel machine $\cm$, we define a timed language $L(\cm)$ consisting of all timed words that encode potentially faulty computations of $\cm$ that start in $(s_I,\varepsilon)$ and end in $(s_F,x)$ for some $x\in M^*$. 
Then we define a timed visibly one-counter net $\A$ such that $L(\A)\cap L(\cm)$ contains exactly \emph{error-free} encodings of such computations. In other words, we use $\A$ to exclude the encodings of faulty computations from $L(\cm)$, 
obtaining undecidability of the non-emptiness problem for $L(\A)\cap L(\cm)$. 
Finally, we define a one-clock timed automaton $\B$ that accepts the complement of $L(\cm)$; hence the problem of deciding whether $L(\A)\not\subseteq L(\B)$ is undecidable.

\subsection{Encoding Faulty Computations}
For the remainder of Section 4, consider a channel machine 
$\cm=(S,s_I,M,\Delta)$ and let $s_F\in S$. 
Define 
$\Sigma_{\mathsf{int}} \defeq (S \backslash\{s_I\}) \cup M \cup L\cup \{\#\}$,
$\Sigma_{\mathsf{call}} \defeq \{s_I,+\}$,
and $\Sigma_{\mathsf{ret}}\defeq\{-,\star\}$,
where $+,-,\#$ and $\star$ are fresh symbols that do not occur in $S\cup M\cup L$. The symbols $+,-,\#$ are called \emph{wildcard symbols}. 
We define a timed language $L(\cm)$ over $\Sigma=\Sigma_{\mathsf{int}}\cup\Sigma_{\mathsf{call}}\cup\Sigma_{\mathsf{ret}}$ that consists of all timed words that encode 
computations of $\cm$ from $(s_I,\varepsilon)$ to $(s_F,x)$ for some $x\in M^*$. 
The definition of $L(\cm)$ follows the ideas presented in~\cite{DBLP:conf/lics/OuaknineW05}, in which (a dual variant of) the 
control state reachability problem for channel machines is reduced to the satisfiability problem for MTL over timed words.

In general, the idea is to encode a configuration of $\cm$ of the form $(s, x)$ by a timed word of duration one. 
This timed word starts with the symbol $s$ at some time $t$.
If the content of the channel $x$ is of the form $m_1 m_2 \dots m_j$, 
then $s$ is followed by the symbols $m_1, m_2,\dots,m_j$ in this order. The timestamps of these symbols must be strictly monotonic and in the interval $(t,t+1)$. 
Due to the denseness of the time domain, one can indeed store the channel content in one time unit without any upper bound on $j$.

\begin{example}
	The initial configuration $(s,\varepsilon)$ is encoded by a
        single-letter timed word, for instance by $(s,1.0)$. 
The configuration $(s',m_1m_2)$ may be encoded by the timed word $(s',7.0)(m_1,7.2)(m_2,7.8)$. 
The choice of the timestamps is arbitrary as long as the timestamps of the message symbols are in the unit interval determined by the timestamp of the preceding control state. 
\end{example}

To encode a computation of a channel machine, we concatenate the encodings of each of the participating configurations in the following way. 
Encodings of consecutive configuration have a time distance of exactly two time units. 
One time unit before the encoding of the next configuration we store the label of the transition. 
Further, each message symbol in the encoding of the current configuration has a matching copy in the encoding of the next configuration,  after exactly two time units, and in accordance with the following rules: 
If the next transition is sending a new message symbol $m$ to the tail of the channel, we add $m$ to the tail of the encoding of the next configuration. 
If the next transition is reading $m$ from the head of the channel, we test whether the first symbol after the control state in the encoding of the current configuration equals $m$, and if so, we remove it from the encoding of the next configuration.  
\begin{example}
	The transition $(s',m_1m_2)\buildrel ?m_1\over\longrightarrow_\cm(s',m_2)$ 
 may be encoded by $$(s',7.0)(m_1,7.2)(m_2,7.8)(?m_1,8.0)(s',9.0)(m_2,9.8).$$ 
 The symbol $m_2$ at time $7.8$  has a matching copy exactly two time units later; the symbol $m_1$ at $7.2$ does not, because it is removed from the channel by the transition. 
\end{example}
This idea of encoding computations of channel machines was used for proving lower complexity bounds for the satisfiability problem of MTL~\cite{DBLP:conf/lics/OuaknineW05} and the universality problem for one-clock timed automata~\cite{DBLP:journals/fuin/AbdullaDOQW08}. 
These problems, however, are decidable~\cite{DBLP:conf/lics/OuaknineW05,DBLP:journals/fuin/AbdullaDOQW08}, whereas here we want to use the encoding to show undecidability of a problem. The crucial point is that the encoding explained above does not exclude timed words that are encoding \emph{faulty} computations of a channel machine: for excluding encodings of faulty computations, we need to require that every message symbol has not only a matching copy \emph{after} two time units, but also a matching copy two time units \emph{before}. In other words, it should not be possible that message symbols appear ``all of a sudden'', \ie, without a corresponding error-free transition.
\begin{example}
	The faulty transition $(s',m_1m_2)\buildrel ?m_1\over\leadsto_\cm(s',m_3m_2)$ 
	may be encoded by $$(s',7.0)(m_1,7.2)(m_2,7.8)(?m_1,8.0)(s',9.0)(m_3,9.1)(m_2,9.8).$$ 
	As in the previous example, the symbol $m_2$ at time $7.8$  has a matching copy exactly two time units later; the symbol $m_1$ at $7.2$ does not, because it is removed from the channel by the encoded transition. 
	However, the symbol $m_3$ at time $9.1$ appears all of a sudden, \ie, without any matching copy two time units before.
 This corresponds to an insertion error in the computation, see Example \ref{example_encoding}. 
\end{example}
The above described \emph{backward-looking conditions}, however, cannot be expressed by neither MTL formulas, nor by one-clock timed automata.\footnote{Backward-looking conditions (or, to be more exact with respect to the reduction: the violation of such backward-looking conditions), can be expressed by MTL with past operators, and by timed automata with two clocks, leading to the undecidability of the corresponding satisfiability and universality problem, respectively~\cite{AD94}.} Due to this failure, it is only the control state reachability problem \emph{for faulty computations} that can be reduced to the satisfiability problem for MTL respectively the universality problem for one-clock timed automata. As mentioned before, the control state reachability problem for faulty computations is \emph{decidable}~\cite{DBLP:journals/fuin/AbdullaDOQW08}.

For our undecidability proof to work,
we have to exclude encodings of faulty computations. 
In other words, we have to exclude timed words in which message symbols occur without any matching copy two time units before. 
This will be carried out by the counter of the visibly timed one-counter net $\A$. 
For this to work, we have to change the encoding in some details, as explained in the following.



Assume we want to encode a given error-free computation of $\cm$. 
Let $n$ be the maximum length of the channel content during this computation. 
Let us assume for a moment that $\cm$ does not start its computation with the empty channel, but the channel contains the word $\#^n$, \ie, $n$ occurrences of the wildcard symbol $\#$. The semantics of $\cm$ is changed in the following way: If a message $m$ is sent, then, instead of adding $m$ to the tail of the channel, the first $\#$ occurring in the channel is \emph{replaced} by $m$. 
If a message $m$ is read, then, it is tested whether the first symbol in the channel is $m$. If this the case, it is removed from the channel and additionally a new wildcard symbol $\#$  is added to the tail of the channel. 
A test for emptiness of the channel is replaced by testing whether the channel only contains the wildcard symbol $\#$. 
\begin{example}
	Let $(s,m_1m_1)\buildrel!m_2\over\longrightarrow_\cm(s',m_1m_1m_2)\buildrel ?m_1\over\longrightarrow_\cm(s',m_1m_2)$ be a computation of $\cm$, and let $n=4$. 
	With the new wildcard semantics this computation is of the form
	$$(s,m_1m_1\#\#)\buildrel!m_2\over\longrightarrow_\cm(s',m_1m_1m_2\#)\buildrel ?m_1\over\longrightarrow_\cm(s',m_1m_2\#\#).$$
\end{example}
Observe that the length of the channel content is constantly $n$. 
We will later exploit this fact.

The encoding of computations of channel machines is now changed with this wildcard semantics in mind. 
The initial configuration is encoded by a timed word that starts with the symbol $s_I$ at some time $t$, and then is followed by $n$ occurrences of $\#$, all of which have monotonically increasing timestamps in the interval $(t,t+1)$. 
The rules for the transitions change accordingly:
If the next transition is sending a message $m$ to the channel, we \emph{replace} in the encoding of the next configuration the first occurrence of $\#$ by $m$. 
If the next transition is reading a message $m$ from the head of the channel, we test whether the first symbol after the control state is $m$, and if so, we remove it from the encoding of the next configuration. We further add a new $\#$ to the end of the encoding of the next configuration.
\begin{example}
	The encoding of the computation of the previous example may be of the form
	$$(s,7.0)(m_1,7.2)(m_1,7.5)(\#,7.6)(\#,7.8)(!m_2,8.0)(s',9.0)(m_1,9.2)(m_1,9.5)(m_2,9.6)(\#,9.8)$$
	$$(?m_1,10.0)
	(s',11.0)(m_1,11.5)(m_2,11.6)(\#,11.8)(\#,11.9).$$
\end{example}
Observe that the length of the encoding of every configuration is constantly $n+1$. This is due to the fact that none of the encoding rules changes the number of symbols in a configuration. 
However, due to the lack of backward-looking conditions, 
it may still happen that some symbols appear ``all of a sudden'', \ie, without a matching copy two time units before. 
In this case, the length of the encoding of the configuration increases. By the encoding conditions above, the increasing effect will be carried over to the encodings of the following configurations. 
Hence, in order to find out whether insertion errors occurred, 
it suffices to compare the length of the encoding of the initial configuration (which is equal to $n+1$) to the length of the encoding of the last configuration: If it is still $n+1$, then we know that no insertion error has occurred; otherwise, some insertion error has occurred. 
The test will be done by the counter of the timed visibly one-counter net $\A$. 
During a run of $\A$ on the encoding of a computation, 
the counter is incremented while the symbols of the encoding of the first configuration are read, and it is decremented while the symbols of the encoding of the last configuration are read. In between, the counter is not touched. By the fact that a one-counter net cannot decrement the counter more often than it was incremented ($\A$ is blocked as soon as the counter would become negative), we can exclude timed words which are encoding potential insertion errors. 

By definition of timed \emph{visibly} one-counter nets, $\A$ is restricted to use $s_I$ and the wildcard symbol $+$ whenever the counter should be incremented, and it can only use the wildcard symbol $-$ and $\star$ whenever the counter should be decremented.  
This requires some extra effort in the encoding, namely that the encoding of the first configuration only uses the wildcard symbol $+$, and the encoding of the last configuration only uses the wildcard symbol $-$. 
Before we give the formal definition of the language $L(\cm)$, 
we show a complete encoding of the faulty computation of Example \ref{example_encoding} for $n=2$.
\begin{example}
	\label{example_thewholeenc}
	$$(s_I,1.0)(+,1.2)(+,1.8)(empty?,2.0)(s,3.0)(\#,3.2)(\#,3.8)(!m_1,4.0)(s,5.0)(m_1,5.2)(\#,5.8)$$
	$$(!m_2,6.0)(s',7.0)(m_1,7.2)(m_2,7.8)(?m_1,8.0)(s',9.0)(m_3,9.1)(m_2,9.8)(\#,9.9)(?m_3,10.0)$$
	$$(s_F,11.0)(-,11.8)(-,11.9)(-,11.95)(\star,12.0)$$
\end{example}
For every $n\in\N$, we define a timed language $L(\cm, n)$  as follows: 
The timed language $L(\cm,n)$ consists of all timed words $w$ over $\Sigma$ that satisfy the following conditions:
\begin{enumerate}
\item $w$ must be strictly monotonic. 
	
\item The untiming of $w$ must be of the form given by the following regular expression: 
	$$s_I (+)^n empty? (S M^* \#^* L)^* s_F -^* \star$$
\item For every $s\in S$, if $s$ is followed by $l$ after one time unit, and $s$ is followed by $s'$ after two time units, then $(s,l,s')\in\Delta$.

\item For every $s\in S$ with $s\neq s_F$, there exists $l\in L$ after exactly one time unit. 
	
\item For every $s\in S$ with $s\neq s_F$, there exists $s'\in S$ after exactly two time units.
	
\item After  $s_F$ the symbol $\star$ occurs after two time units.

\end{enumerate}
Further,
for every infix of $w$ of the form
$$(s,\delta)(\sigma_1,\delta_1)(\sigma_2,\delta_2)\dots(\sigma_k,\delta_k)(l,\delta+1)(s',\delta+2)(\sigma'_1,\delta'_1)\dots(\sigma'_{k'},\delta'_{k'})(l',\delta+3)$$
with $s,s'\in S\backslash\{s_F\}$,
$l,l'\in L$,
$\delta\in\RP$,
$\delta<\delta_1<\dots<\delta_k<\delta+1<\delta+2<\delta'_1<\dots<\delta'_{k'}<\delta+3$,
there exists a strictly increasing function $f:\{1,\dots,k\}\to\{1,\dots,k'\}$ such that the following conditions are satisfied:
\begin{enumerate}
	\addtocounter{enumi}{6} 
\item If $l=empty?$,
	then 
	\begin{enumerate}
	\item $\sigma_i\in\{+,\#\}$ and $\sigma'_{f(i)}=\#$ for all $i\in\{1,\dots,k\}$ (the channel is empty), and
	\item $\delta'_{f(i)}=\delta_i+2$ for every $i\in\{1,\dots,k\}$ (there are matching copies after two time units).
	\end{enumerate}	
\item If $l=!m$ for some $m\in M$, then
	\begin{enumerate}
	\item there exists $i\in\{1,\dots,k\}$ such that $\sigma_i=\#$ (there is some wildcard symbol in the encoding of the current configuration),  
	\item $\sigma'_{f(j)}=m$ and $\delta'_{f(j)}=\delta_j+2$ for $j=\min\{i\in\{1,\dots,k\}\mid \sigma_i=\#\}$ (the first wildcard symbol is replaced by $m$ two time units later), 
	\item $\sigma'_{f(i)}=\sigma_i$ and $\delta'_{f(i)} = \delta_i+2$  for all $i\in\{1,\dots,k\}\backslash\{j\}$ (there are matching copies for the remaining symbols).
	\end{enumerate}
\item If $l=?m$ for some $m\in M$, then 
	\begin{enumerate}
	\item $\sigma_1=m$ (the first symbol is equal to $m$),
	\item $f(k)=k'$, $\sigma'_{f(k)}=\sigma_{k'}=\#$, and 
		$\delta'_{f(k)}=\delta'_{k'}>2+\delta_k$ (a new wildcard symbol is added at the end of the encoding), and
	\item $\sigma'_{f(i)}=\sigma_{i+1}$ and $\delta'_{f(i)}=2+\delta_{i+1}$  for all $i\in\{1,\dots,k-1\}$ (there are matching copies for all other symbols except for the first one, which is removed from the encoding). 
\end{enumerate}		
\end{enumerate}\smallskip

\noindent For an infix like above but with $s'=s_F$ and $l'=\star$, we add conditions (7'), (8') and (9') that   differ from 7, 8, and 9, respectively, in that all message or wildcard symbols to be copied or added to the encoding of the next configuration are replaced by the wildcard symbol $-$.

A simple observation that we will later use is that by these conditions the length of the encodings of two consecutive configurations cannot decrease. 
Indeed, the conditions for representing transitions between two configurations of the channel machine do not change the number of symbols in the encoding of the respective configurations. 
By the lack of backward-looking conditions it may however happen that some symbols appear all of a sudden, \ie, without a matching copy two time units before. We point out that such insertion errors \emph{may} occur, but they are not required by any of the conditions. 
	
\subsection{Excluding Faulty Computations}
We define a timed visibly one-counter net $\A$ over $\Sigma$ such that for every $n\in\N$ the intersection $L(\A)\cap L(\cm,n)$ consists of all timed words that encode \emph{error-free} computations of $\cm$ from $(s_I,\varepsilon)$ to $(s_F,x)$ for some $x\in M^*$. 
The timed visibly one-counter net $\A$ is shown in Figure \ref{figure_A}.
After incrementing the counter while reading the initial letter $s_I$, it non-deterministically guesses a number $n\in \N$ of symbols $+$ and increments the counter each time it reads the symbol $+$.
When $\A$ leaves $\loc_1$, the value of the counter is $n+1$. After that, the counter value is not changed until the state symbol $s_F$ is read.
Then, while reading symbols in $\{-,\star\}$, the counter value is decremented. Note that $\A$ can reach the final location $\loc_4$ only if the number of the occurrences of symbol $-$ between $s_F$ and $\star$ is \emph{at most} $n$: otherwise the counter value would become negative, and thus the edges going out from $\loc_3$ would be blocked. 
Note that $\A$ does not use any clock, and it is deterministic. 
	\begin{figure}
\begin{center}
		\begin{picture}(107,14)(0,-14)
\node[NLangle=0.0,Nmarks=i,ilength=3,Nw=4.0,Nh=4.0,Nmr=2.0](n0)(5.0,-11.0){$\loc_0$}
\node[NLangle=0.0,Nw=4.0,Nh=4.0,Nmr=2.0](n1)(27.0,-11){$\loc_1$}
\node[NLangle=0.0,Nw=4.0,Nh=4.0,Nmr=2.0](n2)(50.0,-11){$\loc_2$}
\drawedge[curvedepth=4.0](n0,n1){\footnotesize{$s_I,push$}}
\drawloop[loopdiam=6](n1){\footnotesize{$+,push$}}
\drawedge[curvedepth=4.0](n1,n2){\footnotesize{$L,noop$}}
\drawloop[loopdiam=6](n2){\footnotesize{$\Sigma\backslash\{s_F\},noop$}}
\node[NLangle=0.0,Nw=4.0,Nh=4.0,Nmr=2.0](n3)(76.0,-11){$\loc_3$}
\node[NLangle=0.0,Nmarks=f,flength=3,Nw=4.0,Nh=4.0,Nmr=2.0](n4)(100.0,-11){$\loc_4$}
\drawedge[curvedepth=4.0](n2,n3){\footnotesize{$s_F,noop$}}
\drawloop[loopdiam=6](n3){\footnotesize{$-, pop$}}
\drawedge[curvedepth=4.0](n3,n4){\footnotesize{$\star, pop$}}
\end{picture}
\caption{The deterministic timed visibly one-counter net $\A$ for excluding insertion errors. }
\label{figure_B}
\end{center}
\end{figure}
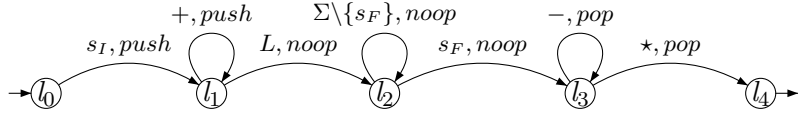
\begin{example}
	The timed word presented in Example \ref{example_thewholeenc} cannot be accepted by the timed visibly one-counter net $\A$ in Figure \ref{figure_B}: 
	The run of $\A$ on the prefix is of the form 
	$$\langle(\loc_0,0),1.0,s_I,(\loc_1,1)\rangle\langle(\loc_1,1),0.2,+,(\loc_1,2)\rangle\langle(\loc_1,2),0.6,+,(\loc_1,3)\rangle\langle(\loc_1,3),0.2,empty?,(\loc_2,3)\rangle.$$
	Here, in $(\loc,c)$, $\loc$ stands for the current location, and $c$ stands for the current value of the counter. 
	The counter value stays constant until we finally read the first $-$:  
	$$\langle(\loc_2,3),1.0,s_F,(\loc_3,3)\rangle\langle(\loc_3,3),0.8,-,(\loc_3,2)\rangle\langle(\loc_3,2),0.1,-,(\loc_3,1)\rangle\langle(\loc_3,1),0.05,-,(\loc_3,0)\rangle.$$
	Now $\A$ is blocked:
	all outgoing edges of $\loc_3$ require the counter to be decremented, which is not possible if the counter has value zero. 
	This is indeed what we want: the timed word in Example \ref{example_encoding} is encoding a faulty computation, and should be excluded.  
\end{example}

\begin{lemma}
	\label{lemma_intersection}
	$\cm$ has an error-free computation from $(s_I,\varepsilon)$ 
	to $(s_F,x)$ for some $x\in M^*$, if, and only if, 
	there exists $n\in \N$  such that  $L(\cm,n)\cap L(\A)\neq \emptyset$.
\end{lemma}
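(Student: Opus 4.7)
The plan is to prove the two directions of the equivalence separately, exploiting the tight interplay between the combinatorial conditions defining $L(\cm,n)$ and the counter arithmetic performed by $\A$.

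For the forward direction, starting from an error-free computation $\gamma$ from $(s_I,\varepsilon)$ to $(s_F,x)$, I would take $n$ to be the maximum channel length occurring along $\gamma$ and lift $\gamma$ to the wildcard semantics by initially padding the channel with $\#^n$, so that every intermediate channel content has length exactly $n$. I would then manufacture a timed word $w$ by placing the $i$-th state symbol at time $2i-1$, the $i$-th transition label at time $2i$, and the $n$ channel symbols of the $i$-th configuration in the unit interval after the state symbol at positions chosen so that matching symbols are exactly two time units apart (e.g., at times of the form $2i-1+j/(n+1)$). The initial configuration block uses the wildcard $+$ in place of each $\#$, the final block uses $-$, and intermediate blocks use the actual messages together with trailing $\#$'s. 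Verifying conditions~(1)--(9') is then a routine case analysis on the transition label; moreover, the run of $\A$ on $w$ pushes exactly $n+1$ times while reading $s_I(+)^n$, leaves the counter untouched while in $\loc_2$, and pops exactly $n+1$ times while reading $(-)^n\star$, thus reaching the accepting location $\loc_4$ without ever blocking.

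For the backward direction, suppose $w\in L(\cm,n)\cap L(\A)$. By condition~(2), $w$ factors as $s_I(+)^n\,empty?\,B_1\,l_1\cdots B_p\,l_p\,s_F(-)^k\star$, where each $B_i$ is a configuration block from $S\cdot M^*\cdot \#^*$. The crucial invariant I would establish is that the length of $B_{i+1}$ is at least that of $B_i$; this follows directly from conditions~(7),(8),(9) and their primed variants, each of which provides a strictly increasing injection $f$ from the channel positions of the current configuration into those of the next. Chaining these injections from the initial block of length $n+1$ to the final $s_F$-block of length $1+k$ yields $k\ge n$. On the other hand, the run of $\A$ on $w$ performs $n+1$ pushes and attempts $k+1$ pops, so $\A$-acceptance forces $k+1\le n+1$, i.e., $k\le n$. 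Hence $k=n$, and each embedding $f$ must be a bijection, so no symbol appears without a matching copy two time units earlier---no insertion error occurs. Reading off each configuration from $B_i$ in the obvious way, interpreting the $\#$-symbols as empty slots and the message symbols as the corresponding channel content, then yields the desired error-free computation of $\cm$ from $(s_I,\varepsilon)$ to $(s_F,x)$ for some $x\in M^*$.

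The main obstacle will be making the length-monotonicity argument fully precise from conditions~(7)--(9') and checking that bijectivity of all the embeddings $f$ is equivalent to error-freeness of the entire encoded computation. Once this is formalised, the two numeric inequalities $k\ge n$ and $k\le n$---arising respectively from the length invariant and the counter invariant---close the argument by a pinching.
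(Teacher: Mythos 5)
Your proposal is correct and follows essentially the same route as the paper's proof: in the forward direction you construct a witness encoding of the error-free computation with no spurious symbols and observe that $\A$ accepts it since the first and last blocks both have length $n+1$; in the backward direction you pinch the length $k$ of the final block between $k\ge n$ (from the length-monotonicity of consecutive encodings guaranteed by conditions (7)--(9')) and $k\le n$ (from the counter of $\A$), concluding that all embeddings are bijections and hence no insertion errors occur. The paper's argument is exactly this pinching, stated slightly more informally.
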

\begin{proof}
	For the direction from left to right,
	let $\gamma$ be an error-free computation of $\cm$ from $(s_I,\varepsilon)$ to $(s_F,x)$ for some $x\in M^*$. 
	Let $n$ be the maximum length of the channel content during $\gamma$. 
	Let $w$ be a timed word in $L(\cm,n)$ in which no message or wildcard symbols occur ``all of a sudden'', \ie, without a matching copy two time units before. 
	Note that such a timed word exists, because $\gamma$ is error-free, and hence there is no need to encode insertion errors into $w$. 
	This implies that the length of the encodings of all, and in particular, the first and the last configuration in $w$ is $n+1$. 
	This implies $w\in L(\A)$.
	Hence $L(\cm,n)\cap L(\A)\neq\emptyset$.

	For the direction from right to left, 
	let $w\in L(\cm,n)\cap L(\A)$ for some $n\in\N$. 
	By definition of $L(\cm,n)$, 
	the length of the encoding of the initial configuration is $n+1$. 
	By the observation above, the length of the  encoding of the last configuration  is thus at least $n+1$, too. 	
	However, by definition of $\A$, 
	the length of the encoding of the last configuration cannot be greater than $n+1$, because otherwise the edge to $\loc_4$ cannot be taken due to the decrement operation. 
	By the observation above, the length of encodings of consecutive configurations do not decrease, and thus the length of the encodings of all configurations is $n+1$. Hence we can conclude that there are no insertion errors necessary to encode the execution of a transition. This implies that 
	there exists some error-free computation $\gamma$ of $\cm$ from $(s_I,\varepsilon)$ to $(s_F,x)$ for some $x\in M^*$. 	
\end{proof}
We finally define $L(\cm)\defeq\bigcup_{n\in\N} L(\cm,n)$.
	\begin{corollary}
		\label{corollary_cap}
		There exists some error-free computation of $\cm$ from $(s_I,\varepsilon)$ to $(s_F,x)$ for some $x\in M^*$ if, and only if, $L(\A)\cap L(\cm)\neq \emptyset$. 
	\end{corollary}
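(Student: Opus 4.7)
The plan is to deduce Corollary \ref{corollary_cap} directly from Lemma \ref{lemma_intersection} by unfolding the definition $L(\cm) \defeq \bigcup_{n\in\N} L(\cm,n)$. Since intersection distributes over union, I will first rewrite
\[
L(\A)\cap L(\cm) \;=\; L(\A)\cap \bigcup_{n\in\N} L(\cm,n) \;=\; \bigcup_{n\in\N}\bigl(L(\A)\cap L(\cm,n)\bigr),
\]
so that $L(\A)\cap L(\cm) \neq \emptyset$ holds if and only if there exists some $n\in\N$ with $L(\A)\cap L(\cm,n)\neq\emptyset$.

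Next, I will invoke Lemma \ref{lemma_intersection}, which states precisely that the existence of such an $n$ is equivalent to the existence of an error-free computation of $\cm$ from $(s_I,\varepsilon)$ to $(s_F,x)$ for some $x\in M^*$. Composing the two equivalences yields the corollary.

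Since the corollary is essentially a reformulation of Lemma \ref{lemma_intersection} after absorbing the parameter $n$ into the definition of $L(\cm)$, there is no substantive obstacle: all the real work (the encoding of configurations, the role of the wildcard symbols, and the argument that $\A$'s counter enforces error-freeness) has already been carried out in the proof of Lemma \ref{lemma_intersection}. The proof will therefore consist of a single short paragraph that makes the set-theoretic manipulation above explicit and cites the lemma.
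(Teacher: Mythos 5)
Your proposal is correct and matches the paper's (implicit) argument: the corollary is stated immediately after the definition $L(\cm)=\bigcup_{n\in\N}L(\cm,n)$ precisely because it follows from Lemma \ref{lemma_intersection} by distributing the intersection over the union. Nothing further is needed.
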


	\subsection{The Reduction}
	Finally, 
	we define a one-clock timed automaton $\B$ such that $L(\B)=\TW \backslash L(\cm)$. 
	The construction of $\B$ follows the same ideas as, \eg, in~\cite{DBLP:conf/formats/AdamsOW07}: 
	$\B$ is the union of several one-clock timed automata, each of them violating some condition of the definition of $L(\cm)$, as described in the following.

	The timed automaton in Figure A accepts timed words that are not strictly monotonic,  thus violating condition (1). 
	For accepting timed words violating condition (2), we can construct a finite automaton that recognizes the complement of the given regular expression. 
	Define 
	for every $s\in S\backslash\{s_f\}$ the sets $\overline{L}(s)=\{l\in L\mid \neg\exists s'\in S. (s,l,s')\in\Delta\}$ and $\overline{S}(s)=\{s'\in S\mid \neg\exists l\in L. (s,l,s')\in\Delta\}$, and $\overline{L}(s_F)=\Sigma\backslash\{\star\}$ and $\overline{S}(s_F)=\Sigma$. 
	Then for every $s$ the corresponding timed automaton in Figure B accepts timed words that contain the encoding of a transition $(s,l,s')\not\in\Delta$, thus violating condition (3).
	Violations of the forward-looking conditions in (4), (5), and (6) can be 
	accepted by the timed automaton in Figure C with, respectively, 
	(4) $M_1=S$, $M_2=L$, and $k=1$, (5) $M_1=S$, $M_2=S$, and $k=2$, and (6) $M_1=\{s_F\}$, $M_2=\{\star\}$, and $k=1$.

	\begin{center}
		\begin{tabular}{p{5cm}p{1cm}p{5cm}}
			\\
\mbox{\begin{picture}(48,22)(0,-22)
\node[NLangle=0.0,Nmarks=i,ilength=3,Nw=4.0,Nh=4.0,Nmr=2.0](n0)(5.0,-13.0){}
\node[NLangle=0.0,Nw=4.0,Nh=4.0,Nmr=2.0](n1)(25.0,-13){}
\node[NLangle=0.0,Nmarks=f,flength=3,Nw=4.0,Nh=4.0,Nmr=2.0](n2)(43.0,-13){}
\drawloop[loopdiam=4](n0){\footnotesize{$\Sigma$}}
\drawloop[loopdiam=4](n2){\footnotesize{$\Sigma$}}
\drawedge[curvedepth=4.0](n0,n1){\footnotesize{$\Sigma, x\defeq 0$}}
\drawedge[curvedepth=4.0](n1,n2){\footnotesize{$\Sigma, x= 0$}}
\end{picture}} & & 
\begin{picture}(50,22)(0,-22)
\node[NLangle=0.0,Nmarks=i,ilength=3,Nw=4.0,Nh=4.0,Nmr=2.0](n0)(5.0,-13.0){}
\node[NLangle=0.0,Nw=4.0,Nh=4.0,Nmr=2.0](n1)(21.0,-13){}
\node[NLangle=0.0,Nmarks=f,flength=3,Nw=4.0,Nh=4.0,Nmr=2.0](n2)(45.0,-13){}

\drawedge[curvedepth=4.0](n0,n1){\footnotesize{$s,x\defeq 0$}}
\drawloop[loopdiam=4](n0){\footnotesize{$\Sigma$}}

\drawloop[loopdiam=4](n2){\footnotesize{$\Sigma$}}
\drawedge[curvedepth=4.0](n1,n2){\footnotesize{$\overline{L}(s), x=1$}}
\drawedge[curvedepth=-4.0,ELside=r](n1,n2){\footnotesize{$\overline{S}(s), x=2$}}

\end{picture}
 \\
 Figure A: Condition (1) & &Figure B: Condition (3)
\end{tabular}\end{center}

\begin{center}
		\begin{tabular}{p{8cm}}
		\\
		\mbox{
 \begin{picture}(75,20)(0,-20)
\node[NLangle=0.0,Nmarks=i,ilength=3,Nw=4.0,Nh=4.0,Nmr=2.0](n0)(5.0,-13.0){}
\node[NLangle=0.0,Nw=4.0,Nh=4.0,Nmr=2.0](n1)(30.0,-13){}
\node[NLangle=0.0,Nmarks=f,flength=3,Nw=4.0,Nh=4.0,Nmr=2.0](n2)(70.0,-13){}
\drawloop[loopdiam=6](n0){\footnotesize{$\Sigma$}}
\drawloop[loopdiam=6](n2){\footnotesize{$\Sigma$}}
\drawedge[curvedepth=4.0](n0,n1){\footnotesize{$M_1, x:=0$}}
\drawloop[loopdiam=6](n1){\footnotesize{$\Sigma\backslash M_2, x<k$}}

\drawedge[curvedepth=6.0](n1,n2){\footnotesize{$M_2, x<k$}}
\drawedge[curvedepth=0.0](n1,n2){\footnotesize{$\Sigma\backslash M_2, x=k$}}
\drawedge[curvedepth=-6.0](n1,n2){\footnotesize{$\Sigma, x>k$}}
\end{picture}}
 \\
 Figure C: Conditions (4), (5), and (6)
 \\
 \\
\end{tabular}\end{center}
Timed words violating condition (7a) can be accepted by the timed automaton in Figure D. 
In Figure E we show a timed automaton that accepts timed words violating condition (7b).

	 \begin{tabular}{p{4.1cm}p{10cm}}
\mbox{
	\begin{picture}(41,20)(0,-20)
\node[NLangle=0.0,Nmarks=i,ilength=3,Nw=4.0,Nh=4.0,Nmr=2.0](n0)(0.0,-13.0){}
\node[NLangle=0.0,Nw=4.0,Nh=4.0,Nmr=2.0](n1)(14.0,-13){}
\node[NLangle=0.0,Nmarks=f,flength=3,Nw=4.0,Nh=4.0,Nmr=2.0](n2)(31.0,-13){}
\drawloop[loopdiam=4](n0){\footnotesize{$\Sigma$}}
\drawloop[loopdiam=4](n2){\footnotesize{$\Sigma$}}
\drawedge[curvedepth=4.0](n0,n1){\footnotesize{$M$}}
\drawloop[loopdiam=4](n1){\footnotesize{$M,\#$}}
\drawedge[curvedepth=4.0](n1,n2){\footnotesize{$\empty?$}}
\end{picture}}
&\begin{picture}(95,20)(0,-20)
\node[NLangle=0.0,Nmarks=i,ilength=3,Nw=4.0,Nh=4.0,Nmr=2.0](n0)(5.0,-13.0){}
\node[NLangle=0.0,Nw=4.0,Nh=4.0,Nmr=2.0](n1)(30.0,-13){}
\node[NLangle=0.0,Nw=4.0,Nh=4.0,Nmr=2.0](n3)(55.0,-13){}
\node[NLangle=0.0,Nmarks=f,flength=3,Nw=4.0,Nh=4.0,Nmr=2.0](n2)(90.0,-13){}
\drawloop[loopdiam=4](n0){\footnotesize{$\Sigma$}}
\drawloop[loopdiam=4](n2){\footnotesize{$\Sigma$}}
\drawloop[loopdiam=4](n1){\footnotesize{$+,\#$}}
\drawedge[curvedepth=4.0](n0,n1){\footnotesize{$+,\#, x:=0$}}
\drawedge[curvedepth=4.0](n1,n3){\footnotesize{$empty?$}}

\drawloop[loopdiam=4](n3){\footnotesize{$\Sigma, x<2$}}

\drawedge[curvedepth=3.0](n3,n2){\footnotesize{$\Sigma\backslash\{\#\}, x=2$}}
\drawedge[curvedepth=-3.0](n3,n2){\footnotesize{$\Sigma, x>2$}}
\end{picture}
\\
\hspace{-5mm}Figure D: Condition (7a) & Figure E: Condition (7b) 
\\
\\
\end{tabular}
The timed automaton in Figure F accepts all timed words for which the last symbol before $!m$ is different from $\#$. This, together with the structure of $w$ ensured by condition (2), implies that condition (8a) is violated. 
The timed automaton in Figure G accepts timed words violating condition (8b). 
For condition (8c), we can use a timed automaton similar to that in Figure E.
By constructing for every $m\in M$ the corresponding automata, we can thus accept all timed words violating the conditions stated in (8). 
	\begin{center}
		\begin{tabular}{p{2.9cm}p{5mm}p{8.5cm}}
\mbox{\begin{picture}(30,20)(0,-19)
\node[NLangle=0.0,Nmarks=i,ilength=3,Nw=4.0,Nh=4.0,Nmr=2.0](n0)(5.0,-13.0){}
\node[NLangle=0.0,Nw=4.0,Nh=4.0,Nmr=2.0](n1)(16.0,-13){}
\node[NLangle=0.0,Nmarks=f,flength=3,Nw=4.0,Nh=4.0,Nmr=2.0](n2)(25.0,-13){}
\drawloop[loopdiam=4](n0){\footnotesize{$\Sigma$}}
\drawloop[loopdiam=4](n2){\footnotesize{$\Sigma$}}
\drawedge[curvedepth=4.0](n0,n1){\footnotesize{$\ \ \Sigma\backslash\{\#\}$}}
\drawedge[curvedepth=4.0](n1,n2){\footnotesize{$!m$}}
\end{picture}}& &\begin{picture}(83,19)(0,-19)

\node[NLangle=0.0,Nmarks=i,ilength=3,Nw=4.0,Nh=4.0,Nmr=2.0](n0)(5.0,-13.0){}
\node[NLangle=0.0,Nw=4.0,Nh=4.0,Nmr=2.0](n1)(18.0,-13){}
\node[NLangle=0.0,Nw=4.0,Nh=4.0,Nmr=2.0](n2)(34.0,-13){}
\node[NLangle=0.0,Nw=4.0,Nh=4.0,Nmr=2.0](n3)(45.0,-13){}
\node[NLangle=0.0,Nw=4.0,Nh=4.0,Nmr=2.0](n4)(54.0,-13){}
\node[NLangle=0.0,Nmarks=f,flength=3,Nw=4.0,Nh=4.0,Nmr=2.0](n5)(78.0,-13){}

\drawloop[loopdiam=4](n0){\footnotesize{$\Sigma$}}
\drawedge[curvedepth=3.0](n0,n1){\footnotesize{$S\cup M$}}
\drawedge[curvedepth=3.0](n1,n2){\footnotesize{$\#,x\defeq 0\ $}}
\drawloop[loopdiam=4](n2){\footnotesize{$\#$}}
\drawedge[curvedepth=3.0](n2,n3){\footnotesize{$!m$}}
\drawedge[curvedepth=3.0](n3,n4){\footnotesize{$S$}}

\drawedge[curvedepth=-4.0](n4,n5){\footnotesize{$\Sigma,x\neq 2$}}
\drawedge[curvedepth=4.0](n4,n5){\footnotesize{$\Sigma\backslash\{m\},x=2$}}
\drawloop[loopdiam=4](n5){\footnotesize{$\Sigma$}}
\end{picture} \\
Figure F: Condition (8a)& &Figure G: Condition (8b) \\
\\ 
\end{tabular}	
\end{center}
Last but not least we present timed automata that accept timed words violating condition (9). 
For all $m\in M$, we define  
timed automata shown in Figure H  und I, respectively, that accept timed words violating condition (9a) and (9b), respectively.
For (9c) we construct a timed automaton very similar to that in Figure E. 
	 \begin{center}
\begin{tabular}{p{5cm}p{8cm}}
	\\
\mbox{\begin{picture}(46,19)(0,-19)

\node[NLangle=0.0,Nmarks=i,ilength=3,Nw=4.0,Nh=4.0,Nmr=2.0](n0)(5.0,-13.0){}
\node[NLangle=0.0,Nw=4.0,Nh=4.0,Nmr=2.0](n1)(16.0,-13){}
\node[NLangle=0.0,Nw=4.0,Nh=4.0,Nmr=2.0](n2)(30.0,-13){}
\node[Nmarks=f,flength=3,NLangle=0.0,Nw=4.0,Nh=4.0,Nmr=2.0](n3)(41.0,-13){}

\drawloop[loopdiam=4](n0){\footnotesize{$\Sigma$}}

\drawedge[curvedepth=3.0](n0,n1){\footnotesize{$S$}}
\drawedge[curvedepth=3.0](n1,n2){\footnotesize{$\Sigma\backslash\{m\}$}}
\drawloop[loopdiam=4](n2){\footnotesize{$M,\#$}}
\drawloop[loopdiam=4](n3){\footnotesize{$\Sigma$}}
\drawedge[curvedepth=3.0](n2,n3){\footnotesize{$?m$}}
\end{picture}}&\begin{picture}(80,19)(0,-19)
\node[NLangle=0.0,Nmarks=i,ilength=3,Nw=4.0,Nh=4.0,Nmr=2.0](n0)(5.0,-13.0){}
\node[NLangle=0.0,Nw=4.0,Nh=4.0,Nmr=2.0](n1)(28.0,-13){}
\node[NLangle=0.0,Nw=4.0,Nh=4.0,Nmr=2.0](n2)(40.0,-13){}
\node[NLangle=0.0,Nw=4.0,Nh=4.0,Nmr=2.0](n3)(60.0,-13){}
\node[NLangle=0.0,Nmarks=f,flength=3,Nw=4.0,Nh=4.0,Nmr=2.0](n4)(75.0,-13){}

\drawloop[loopdiam=4](n0){\footnotesize{$\Sigma$}}

\drawedge[curvedepth=4.0](n0,n1){\footnotesize{$M,\#,x\defeq 0$}}

\drawedge[curvedepth=4.0](n1,n2){\footnotesize{$?m$}}
\drawloop[loopdiam=6](n2){\footnotesize{$\Sigma, x\le 2$}}

\drawedge[curvedepth=4.0](n2,n3){\footnotesize{$M, x>2$}}
\drawedge[curvedepth=-4.0](n2,n4){\footnotesize{$L$}}

\drawloop[loopdiam=4](n3){\footnotesize{$M$}}
\drawedge[curvedepth=4.0](n3,n4){\footnotesize{$L$}}
\drawloop[loopdiam=4](n4){\footnotesize{$\Sigma$}}
\end{picture} \\
Figure H: Condition (9a)& Figure I: Condition (9b) \\
\\
\end{tabular}\end{center}
So let $\B$ the union of all these timed automata. 
One can easily see that for every timed word $w\in\TW$ we have $w\in L(\B)$ if, and only if, $w\not\in L(\cm)$.
By Corollary \ref{corollary_cap},
	there exists some error-free computation of $\cm$ from $(s_I,\varepsilon)$ to $(s_F,x)$ for some $x\in M^*$ if, and only if, $L(\A)\cap L(\cm)\neq \emptyset$. 
	The latter is equivalent to $L(\A)\not\subseteq L(\B)$. 
	Hence, the language inclusion problem is undecidable.
		
	\subsection{Undecidability of the Model Checking Problem for \mtl}	
	The proof idea of Theorem \ref{theorem_main_lang_inc} can be used to show the undecidability of the following model checking problem: given a timed visibly one-counter net $\A$, and an \mtl formula $\varphi$, does every $w\in L(\A)$ satisfy  $\varphi$?
Recall that this problem is decidable for the class of timed automata~\cite{DBLP:conf/lics/OuaknineW05}.
We prove that adding a visibly counter without zero test already makes the problem undecidable. 
But first, let us recall the syntax and semantics of MTL. 

Let $\Sigma$ be a  finite alphabet.
	The set of \mtl formulae is built up from $\Sigma$ 
        by Boolean connectives and a time constraining version of the {\em until}
modality:
$$\varphi \ndefeq \true \sep a \sep \neg\varphi \sep \varphi_1\wedge\varphi_2 \sep
        \varphi_1\U_{I}\varphi_2  $$
        where $a\in\Sigma$ and $I \subseteq \RP$ is an open, closed, or half-open interval with endpoints in $\N\cup\{\infty\}$.

We interpret \mtl formulae in the \emph{pointwise semantics}, \ie, over finite timed words over $\Sigma$. 
Let $w=(a_1,t_1)(a_2,t_2)\dots(a_n,t_n)$ be a timed word, and let  $i\in\{1,\dots,n\}$. 
We define the {\em satisfaction relation for \mtl}\hspace{-1.4mm}, denoted by $\models$,
inductively as follows:
\begin{flalign*}
	(w,i) \models a &\hspace{2mm}\Leftrightarrow\hspace{2mm} a_i=a\\
	 (w,i) \models
\neg\varphi & \hspace{2mm}\Leftrightarrow\hspace{2mm} (w,i)\not\models \varphi, \\
	 (w,i) \models \varphi_1\wedge\varphi_2 & \hspace{2mm}\Leftrightarrow\hspace{2mm}  
	(w,i)\models \varphi_1 \textrm{ and } (w,i)\models \varphi_2,\\
	 (w,i)\models \varphi_1\U_{I}\varphi_2 & \hspace{2mm}\Leftrightarrow\hspace{2mm} \exists j.
	 i<j\leq n: (w,j)\models\varphi_2 \textrm{ and } t_j - t_i \in I, \\
& \hspace{9.5mm}\textrm{and } \forall k.i< k<j:(w,k)\models\varphi_1.
\end{flalign*}
We say that a timed word $w\in\TW$ satisfies an \mtl formula $\varphi$, written $w\models\varphi$, if $(w,1)\models \varphi$.

Note that \mtl only allows to express restrictions on time, and it does not allow for any restrictions on the values of the counters. 
In fact, it is proved that as soon as we add to \mtl the capability for expressing restrictions on the values of a counter that can be incremented and decremented, model checking is undecidable~\cite{DBLP:conf/lata/Quaas13}.
The proof of the following theorem is based on the fact that MTL can encode computations of channel machines with insertion errors~\cite{DBLP:conf/lics/OuaknineW05}.
\begin{theorem}
	\label{theorem_model_checking}
	The model checking problem for timed visibly one-counter nets and \mtl is undecidable, even if the timed visibly one-counter net does not use any clocks and is deterministic. 
\end{theorem}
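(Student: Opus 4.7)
The proof will be a straightforward adaptation of the reduction used for Theorem \ref{theorem_main_lang_inc}. Recall that in that proof we exhibited a deterministic, clockless timed visibly one-counter net $\A$ (Figure \ref{figure_B}) and a timed language $L(\cm) = \bigcup_{n\in\N} L(\cm,n)$ such that, by Corollary \ref{corollary_cap}, the channel machine $\cm$ admits an error-free computation from $(s_I,\varepsilon)$ to $(s_F,x)$ for some $x\in M^*$ if, and only if, $L(\A)\cap L(\cm)\neq\emptyset$. The plan is to take exactly the same $\A$ and to replace the one-clock timed automaton $\B$ (which accepts the complement of $L(\cm)$) by an \mtl formula.

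The key step is to construct an \mtl formula $\psi$ over $\Sigma$ such that $w\models\psi$ if, and only if, $w\in L(\cm)$. This is where I would appeal to the encoding of (possibly faulty) channel-machine computations by \mtl formulas developed in~\cite{DBLP:conf/lics/OuaknineW05}; essentially all the conditions (1)--(9) in the definition of $L(\cm,n)$ are \emph{forward-looking} in time and hence expressible in \mtl with the bounded until modality $\U_I$. Concretely, strict monotonicity (1) is captured by a global assertion using $\U_{[0,0]}$; the regular untiming condition (2) is a standard safety property; condition (3) is a local pattern of the form ``at every position labelled $s\in S$, the label one time unit later and two time units later are consistent with $\Delta$'', expressible using $\finally_{[1,1]}$ and $\finally_{[2,2]}$; conditions (4)--(6) are analogous forward-looking pattern assertions; and the matching conditions (7)--(9) can be expressed exactly as in the \mtl encoding of channel machine computations in~\cite{DBLP:conf/lics/OuaknineW05}, since each of them only demands the existence of a matching copy \emph{two time units later} (never before) together with a local check on the first symbol of the encoding of the current configuration. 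Let $\varphi \defeq \neg\psi$.

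Then ``every $w\in L(\A)$ satisfies $\varphi$'' is equivalent to $L(\A)\cap L(\cm)=\emptyset$, which by Corollary \ref{corollary_cap} is equivalent to the non-existence of an error-free computation of $\cm$ from $(s_I,\varepsilon)$ to $(s_F,x)$ for any $x\in M^*$. Since the control-state reachability problem for error-free computations of channel machines is undecidable~\cite{Brand:1983:CFM:322374.322380}, so is the model checking problem. The net $\A$ is deterministic and uses no clocks, giving the strengthened form of the statement.

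The main obstacle is spelling out the \mtl formula $\psi$ in full detail, but no new ideas beyond those in Section 4 and~\cite{DBLP:conf/lics/OuaknineW05} are required: the forward-looking conditions defining $L(\cm)$ are precisely the ones that \mtl is known to express, and the crucial \emph{backward-looking} condition used to rule out insertion errors is delegated, as before, to the counter of $\A$ rather than to the specification. This cleanly separates the roles of the model and the specification, and explains why adding a visibly counter (without zero test) to the model already breaks the decidability of \mtl model checking for timed automata of~\cite{DBLP:conf/lics/OuaknineW05}.
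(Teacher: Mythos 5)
Your proposal is correct and follows essentially the same route as the paper: reuse the net $\A$ from Theorem \ref{theorem_main_lang_inc}, build an \mtl formula defining $L(\cm)$ via the forward-looking encoding of~\cite{DBLP:conf/lics/OuaknineW05}, and model-check $\A$ against its negation, reducing from control-state reachability via Corollary \ref{corollary_cap}. If anything, you are more careful than the paper's own (very terse) proof in making the negation $\varphi=\neg\psi$ explicit so that a nonempty intersection $L(\A)\cap L(\cm)$ corresponds to a negative model-checking instance.
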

\begin{proof}
	The definition of an \mtl formula $\varphi$ such that $L(\varphi)=L(\cm)$ is straightforward, see, \eg, ~\cite{DBLP:conf/lics/OuaknineW05}.
	By Corollary \ref{corollary_cap}, 
	there exists some error-free computation of $\cm$ from $(s_I,\varepsilon)$ to $(s_F,x)$ for some $x\in M^*$ if, and only if, $L(\A)\cap L(\varphi)\neq\emptyset$. 
	The latter, however, is equivalent to saying that there exists some timed word $w\in L(\A)$ such that $w\not\models\varphi$. 
	Hence the model checking problem is undecidable. 
\end{proof}

We would like to remark that the proof of Theorem \ref{theorem_model_checking} shares some similarities with the proof of the undecidability of model checking one-counter machines (\ie, one-counter automata without input alphabet) and Freeze LTL with one register (LTL$^\downarrow_1$, for short)~\cite{DLS-tcs10}. 
In~\cite{DBLP:journals/tocl/DemriL09},
it is proved that 
LTL$^\downarrow_1$ can encode computations of \emph{counter automata with incrementing errors}. 
Similar to the situation for \mtl and channel machines, $\mathsf{LTL}^\downarrow_1$ can however not encode \emph{error-free} computations of counter automata. 
In~\cite{DLS-tcs10}, a one-counter machine is used to repair this incapability, resulting in the undecidability of the model checking problem. 
The one-counter machine in~\cite{DLS-tcs10} does not use zero tests; however, we point out that in contrast to our visibly timed one-counter net the one-counter machine in~\cite{DLS-tcs10} is \emph{non-deterministic}. Indeed, model checking \emph{deterministic} one-counter machines and LTL$^\downarrow_1$ is decidable~\cite{DLS-tcs10}.

We further remark that using a similar proof, we can show that the model checking problem for \emph{parametric timed automata} and MTL is undecidable, even if the automaton only uses one parametric clock, one parameter and is deterministic~\cite{DBLP:journals/corr/Quaas14a}.

\subsection{Energy Problems on Timed Automata with Discrete Weights}
Next  we will consider an interesting extension of \emph{lower-bound energy problems on weighted timed automata}, introduced in~\cite{DBLP:conf/formats/BouyerFLMS08}, which gained  attention in the last years, see, \eg,~\cite{DBLP:journals/pe/BouyerLM14,DBLP:conf/lata/Quaas11,DBLP:conf/hybrid/BouyerFLM10}.
In lower-bound energy problems, 
one is interested whether in a given automaton with some weight variable whose value can be increased and decreased, there exists a successful run in which all accumulated weight values are never below zero.  
Similar problems have also been considered for untimed settings, \eg,~\cite{DBLP:conf/birthday/JuhlLR13,DBLP:conf/atva/EsikFLQ13,DBLP:conf/ictac/FahrenbergJLS11,DBLP:conf/icalp/BrazdilJK10}.

A \emph{timed automaton with discrete weights} (dWTA, for short)
is syntactically the same as a timed one-counter net.
	In the semantical graph induced by a dWTA, however, we allow the value of the counter (or, the \emph{weight variable}) to become negative.
	Hence the value of the weight variable does not influence the behaviour of the dWTA, 
	because, different to timed one-counter nets, transitions that result in negative values are not blocked. 
	We remark that for the  simple reasons that (1) the value of the weight variable does not influence the behaviour of dWTA, and (2) MTL does not restrict the values of the weight variable, the model checking problem for dWTA and MTL is decidable, using the same algorithm as for timed automata~\cite{DBLP:conf/lics/OuaknineW05}. 
	We define the \emph{model checking energy problem} for dWTA and \mtl as follows: given a dWTA $\A$ and an \mtl formula $\varphi$, does there exist some accepting run $\rho$ of $\A$ such that the value of the weight variable is always non-negative, and the timed word $w$ associated with $\rho$ satisfies $\varphi$?	
	For the special case $\varphi=\true$, 
	the problem is decidable in polynomial time for one-clock dWTA~\cite{DBLP:conf/formats/BouyerFLMS08}.
	\begin{theorem}
		\label{theorem_energy}
		The model checking energy problem for dWTA and \mtl is undecidable, even if the dWTA uses no clocks. 
	\end{theorem}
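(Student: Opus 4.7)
The plan is to mirror the argument of Theorem \ref{theorem_model_checking}, reusing both the clockless timed visibly one-counter net $\A$ depicted in Figure \ref{figure_B} and the \mtl formula $\varphi$ over $\Sigma$ with $L(\varphi)=L(\cm)$ whose existence is asserted in that proof.

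The key observation is that $\A$ is syntactically a dWTA: its stack alphabet is a singleton, so $\A$ is in fact a timed one-counter net, and reinterpreting $push$/$pop$ as $+1$/$-1$ on a weight variable that is now allowed to go negative yields a dWTA $\A'$ using no clocks. The only difference between $\A$ and $\A'$ lies in the semantics of the counter: in $\A$ an edge that would make the counter negative is blocked, whereas in $\A'$ the weight variable is permitted to drop below zero. Consequently, the accepting runs of $\A$ coincide exactly with the accepting runs of $\A'$ along which the weight value never becomes negative, and the timed words associated with the latter runs form precisely $L(\A)$.

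It follows that the model checking energy problem has a positive answer for $\A'$ and $\varphi$ if, and only if, $L(\A)\cap L(\varphi)=L(\A)\cap L(\cm)$ is non-empty. By Corollary \ref{corollary_cap}, this is in turn equivalent to the existence of an error-free computation of $\cm$ from $(s_I,\varepsilon)$ to $(s_F,x)$ for some $x\in M^*$, which is undecidable. I do not anticipate any genuine obstacle; the only point worth verifying carefully is that, because \mtl cannot express constraints on the weight variable, the non-negativity requirement of the energy condition is the \emph{only} mechanism transferring the blocking semantics of $\A$ into the problem for $\A'$, and it does so faithfully.
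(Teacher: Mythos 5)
Your proposal is correct and rests on exactly the paper's key idea: define the dWTA to be syntactically equal to a timed one-counter net and observe that the non-negativity requirement of the energy condition faithfully recovers the blocking semantics of the counter. The only cosmetic difference is that the paper packages this as a reduction from the already-undecidable model checking problem for timed one-counter nets (Theorem \ref{theorem_model_checking}), pairing $\A'$ with $\neg\varphi$, whereas you reduce directly from channel-machine reachability via Corollary \ref{corollary_cap}; both routes are sound and yield the clockless case.
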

	\begin{proof}		
		For the proof, we reduce the  model checking problem for timed one-counter nets and \mtl to the energy problem. 
		Note that timed one-counter nets are a generalization of timed visibly one-counter nets, and thus by Theorem \ref{theorem_model_checking} the model checking problem is undecidable. 
		Let $\A$ be a timed one-counter net, and let $\varphi$ be an \mtl formula. 
		Define $\A'$ to be the dWTA that is syntactically equal to $\A$.
		One can easily prove that $(\A,\varphi)$ is a negative instance of the model checking problem if, and only if, $(\A',\neg\varphi)$ is a positive instance of the energy problem. 
	\end{proof}

	\section{Decidability Result}
	In the preceding section, we showed that one cannot automatically verify  timed automata extended with unbounded discrete data structures against real-time specifications expressed by timed automata or MTL-formulas. 
	In this section, 
	we prove that in contrast to this, we can use one-clock timed counter nets as specification for model checking timed automata: the language inclusion problem $L(\A)\subseteq L(\B)$ is decidable with non-primitive recursive complexity if $\A$ is a timed automaton and $\B$ is a one-clock timed counter net (Theorem \ref{theorem_main_dec}).
	We will first give the formal proof of Theorem \ref{theorem_main_dec}. 
	After that, we will argue that this result extends known facilities for the verification of timed automata. 
	
	\subsection{Proof of Theorem \ref{theorem_main_dec}}
	\begin{proof}
		For the case that $\A$ is a timed automaton and $\B$ is a one-clock timed automaton,
		the language inclusion problem $L(\A)\subseteq L(\B)$ is decidable~\cite{DBLP:conf/lics/OuaknineW04} with non-primitive recursive complexity~\cite{DBLP:journals/fuin/AbdullaDOQW08}. 
		The lower bound hence follows, and the decidability proof is an adaptation of the decidability proof in~\cite{DBLP:conf/lics/OuaknineW04}.

		The proof is based on the theory of well-quasi-orders, and we start with defining some useful notions.
	
	Let $A,B$ be two sets, and let $\preceq$ be a binary relation on $A$.
	Then $\preceq$ is a \emph{quasi-order} on $A$ if $\preceq$ is reflexive and transitive. 
	$\preceq$ is a \emph{well-quasi-order} on $A$ if it is a quasi-order and for every infinite sequence $a_1,a_2,a_3,\dots$ in $A$ there exist indices $i<j$ such that $a_i\preceq a_j$. 
	A standard example for a well-quasi-order is the pointwise order $\leq^k$ on the set $\N^k$ of vectors of $k$ natural numbers (Dickson's Lemma, \cite{Dickson1913}). 
	
	Let $\preceq$ be a quasi-order on $A$, 
	and let $\sqsubseteq$ be a quasi-order on $B$. 
	We define the \emph{product} of $\preceq$ and $\sqsubseteq$ on $(A\times B)$ by $(a,b)\leq (a',b')$, if and only if, $a\preceq a'$ and $b \sqsubseteq b'$. 	
	We define the \emph{monotone domination order} $\preceq^*$ on $A^*$ by $a_1 a_2 \dots a_m \preceq^* a'_1 a'_2\dots a'_n$ if and only if there exists a strictly increasing function ${f:\{1,\dots,m\}\to\{1,\dots,n\}}$ such that for all $i\in\{1,\dots,m\}$ we have $a_i\preceq a'_{f(i)}$. 
	We define the \emph{subset order} $\preceq^\mathcal{P}$ on the set $\mathcal{P}(A)$ of finite subsets of $A$ by
	$A_1\preceq^\mathcal{P} A_2$ if and only if there is an injective mapping $f:A_1\to A_2$ such that for all $a\in A_1$ we have $a\preceq f(a)$. 
	
	
	\begin{lemma}[Higman's Lemma~\cite{Higman1952}]
		\label{lemma_higman}
		\begin{enumerate}
		\item If $\preceq$ and $\sqsubseteq$ are well-quasi-orders on $A$ and $B$, respectively, 
			then the product of $\preceq$ and $\sqsubseteq$ is a well-quasi-order on $(A\times B)$. 
	\item  If $\preceq$ is a well-quasi-order on $A$,
		then the monotone domination order $\preceq^*$ is a well-quasi-order on $A^*$. 
	\item  If $\preceq$ is a well-quasi-order on $A$,
		then the subset order $\preceq^\mathcal{P}$ is a well-quasi-order on $\mathcal{P}(A)$. 
	\end{enumerate}
	\end{lemma}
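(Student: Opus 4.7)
The plan is to derive all three parts from a single equivalent characterization of well-quasi-orders, combined with a minimal bad sequence argument for part (2).

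First I would establish the standard equivalence: a quasi-order $\preceq$ on $A$ is a well-quasi-order if and only if every infinite sequence $a_1, a_2, \ldots$ in $A$ admits an infinite non-decreasing subsequence $a_{i_1} \preceq a_{i_2} \preceq \cdots$. The nontrivial direction is an index-marking argument: call $i$ \emph{terminal} if no $j > i$ satisfies $a_i \preceq a_j$; either there are infinitely many such $i$, producing an infinite antichain and contradicting the wqo property, or there are only finitely many, in which case one can greedily construct the desired non-decreasing subsequence starting past the last terminal index.

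Part (1) is then immediate by applying this characterization twice: given an infinite sequence $(a_n, b_n)_{n\geq 1}$ in $A \times B$, extract an infinite subsequence non-decreasing in the $A$-coordinate using the wqo on $A$, then a further infinite subsequence non-decreasing in the $B$-coordinate using the wqo on $B$; any two consecutive indices in the final subsequence witness the product order.

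For part (2) I plan to use the Nash--Williams minimal bad sequence argument. Suppose for contradiction that $\preceq^*$ admits a bad sequence in $A^*$, and choose such a sequence $w_1, w_2, \ldots$ in which $w_n$ has minimal possible length given $w_1, \ldots, w_{n-1}$. No $w_n$ can be empty (the empty word embeds into any word), so write $w_n = a_n \cdot w'_n$. Using the wqo property on $A$, extract an infinite subsequence $a_{i_1} \preceq a_{i_2} \preceq \cdots$ of the head letters and consider the hybrid sequence $w_1, \ldots, w_{i_1 - 1}, w'_{i_1}, w'_{i_2}, \ldots$; since $w'_{i_1}$ is strictly shorter than $w_{i_1}$, minimality forces this hybrid sequence not to be bad, so an embedding must exist between two of its elements. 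A case analysis (both witnesses in the prefix; one in the prefix and one a tail $w'_{i_m}$; or both tails, in which case $a_{i_m}\preceq a_{i_{m'}}$ combined with $w'_{i_m}\preceq^* w'_{i_{m'}}$ reconstructs $w_{i_m}\preceq^* w_{i_{m'}}$) produces an embedding $w_k \preceq^* w_l$ with $k<l$ in the original sequence, contradicting its badness. The main obstacle is setting up the minimality cleanly and handling the two-tails case, as the other cases are immediate.

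Part (3) then reduces to part (2): given a sequence $A_1, A_2, \ldots$ in $\mathcal{P}(A)$, fix an arbitrary enumeration $w_n \in A^*$ of each $A_n$ and apply (2) to obtain indices $i<j$ with $w_i \preceq^* w_j$ via a strictly increasing function $f$. Since the letters of $w_i$ are pairwise distinct elements of $A_i$, the map $f$ induces an injection $A_i \to A_j$ sending each element to a $\preceq$-dominator, witnessing $A_i \preceq^{\mathcal{P}} A_j$.
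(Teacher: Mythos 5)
The paper does not prove this lemma at all: it is invoked as a classical result with a citation to Higman, so there is no in-paper argument to compare against. Your proposal is a correct, self-contained proof along entirely standard lines, and all three parts go through. The equivalence with the existence of infinite non-decreasing subsequences is proved correctly (the terminal-index argument is sound: infinitely many terminal indices would yield a bad subsequence), part (1) follows by two successive extractions, and part (2) is the Nash--Williams minimal bad sequence argument with the case analysis handled properly, including the two-tails case where $a_{i_m}\preceq a_{i_{m'}}$ and $w'_{i_m}\preceq^* w'_{i_{m'}}$ recombine into $w_{i_m}\preceq^* w_{i_{m'}}$. In part (3) the one point that deserves an explicit word is the injectivity of the induced map $A_i\to A_j$: it relies on the enumeration $w_j$ listing the elements of $A_j$ without repetition, so that distinct positions in the image of the strictly increasing $f$ correspond to distinct elements of $A_j$; you state this implicitly ("pairwise distinct elements") and it is fine, but it is the only place where the reduction could silently break if the enumerations allowed repeated letters. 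No gaps otherwise.
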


		Let $\A=(\Sigma^\A,\Gamma^\A,\locs^\A,\locs_0^\A,\locs_f^\A,\clocks,\edges^\A)$ be a timed automaton with clock variables $\clocks=\{x_1,\dots,x_n\}$, 
		and let $\B=(\Sigma^\B,\mathfrak{n},\locs^\B,\locs_0^\B,\locs_f^\B,\{x\},\edges^\B)$ be a timed counter net of dimension $\mathfrak{n}$ with a single clock variable $x$.
		Without loss of generality, we may assume that $\Sigma^\A=\Sigma^\B$. 
		
		Note that a state $(\loc,\nu)$ of $\A$ is an element in $\locs^\A\times(\RP)^{\clocks}$, 
		and a state $(\loc,v,\vec{u})$ of $\B$ is an element in $\locs^\B\times\RP\times\N^{\mathfrak{n}}$.		
		A \emph{joint configuration of $\A$ and $\B$} is a
		pair $(q,\gamma)$, where $q$ is a state of $\A$, and $\gamma$ is a set of states of $\B$.
		We use $\gs^{\A\B}$ to denote the set of all joint configurations of $\A$ and $\B$. 		
		We say that a joint configuration $(q,\gamma)$ is \emph{initial} if $q\in(\locs_0^\A\times \{0\}^\clocks)$ and $\gamma=\{(\loc,0,\vec{0}) \mid \loc\in\locs^\B_0\}$ (with 
		$\vec{0}$ we denote the vector of dimension $\mathfrak{n}$ containing only $0$).
		We say that a joint configuration $(q,\gamma)$ is \emph{bad} if $q=(\loc,\nu)$ for some $\loc\in\locs_f^\A$, and for all states $(\loc',v',\vec{u}')\in\gamma$ we have $\loc'\not\in\locs_f^\B$. 
		The \emph{joint behaviour of $\A$ and $\B$} is defined as follows: 
		For a state $(\loc,\nu)$ of $\A$, $\delta\in\RP$ and $a\in\Sigma$, 
		we define $\succ^\A((\loc,\nu),\delta,a)=\{(\loc',\nu')\mid \langle(\loc,\nu),\delta,a, (\loc',\nu')\rangle\hspace{-1.4mm}\in\transA\}$.
		For a set $\gamma$ of states of $\B$, we define 
		$\succ^\B(\gamma,\delta,a)=\{(\loc',v',\vec{u}')\mid \exists (\loc,v,\vec{u})\in\gamma. \langle(\loc,v,\vec{u}),\delta, a, (\loc',v',\vec{u}')\rangle\hspace{-1.4mm}\in\Rightarrow_\B\}$. Note that $\succ^\B(\gamma,\delta,a)$ is a set of states of $\B$, and it may be empty. 
		Finally, we define the 
		transition relation $\transAB$ on $\gs^{\A\B}\times\Sigma\times\gs^{\A\B}$ by  $\langle(q,\gamma),a,(q',\gamma')\rangle\hspace{-1.4mm}\in\transAB$ if there exists some $\delta\in\RP$ such that $q'\in\succ^\A(q,\delta,a)$ and  $\gamma'=\succ^\B(\gamma,\delta,a)$.

		Next, we encode joint configurations of $\A$ and $\B$ by  finite untimed words over the set $\Lambda$ of finite subsets of $(\locs^\A \times \clocks\times \reg\times\vec{0}) \cup \{\locs^\B \times \{x\}\times \reg \times \N^{\mathfrak{n}})$.
		Here, 	
		$\reg\defeq \{0,1,\dots,\cmax\}\cup\{\top\}$, where $\cmax$ is an integer greater than the maximal constant occurring in clock constraints in both $\A$ and $\B$, and $\top$ is a symbol representing all values greater than $\cmax$. 
		Let $C=\left((\loc,\nu),\{(\loc_1,v_1,\vec{u}_1),\dots,(\loc_m,v_m,\vec{u}_m)\}\right)$
		be a joint configuration.		
		To simplify the definition, we write $C$ as a set 
		$$\{(\loc,x_1,\nu(x_1),\vec{0}),\dots,(\loc,x_n,\nu(x_n),\vec{0}),(\loc_1,x,v_1,\vec{u}_1),\dots,(\loc_m,x,v_m,\vec{u}_m)\}.$$
		Partition $C$ into a sequence of subsets 
		$C_0,C_1,\dots,C_\rho,C_\top$,
		such that $C_\top = \{(k,y,\eta,\vec{\mu})\in C\mid \eta>\cmax\}$, and
		if $i,j\neq\top$,
		then for all $(k,y,\eta,\vec{\mu})\in C_i$, $(k',y',\eta',\vec{\mu'})\in C_j$, 
		we have $\frac(\eta)<\frac(\eta')$ if, and only if, $i<j$, and 
		$\frac(\eta)=\frac(\eta')$ if, and only if, $i=j$. Here, $\frac(r)$ denotes the fractional part of a real number $r$. 		
		In this way,   $(k,y,\eta,\vec{\mu})$ and $(k',y',\eta',\vec{\mu'})$ are in the same subset $C_i$ if, and only if, $\eta$ and $\eta'$ are both smaller than or equal to $\cmax$ and have the same fractional part. 
		In addition, we require that 
		$(k,y,\eta,\vec{\mu})\in C_0$ if, and only if, the fractional part of $\eta$ is zero, and
		$C_i\neq\emptyset$ for all $i\in\{1,\dots,\rho\}$. 
		We define the encoding $\enc(C)$ of $C$ to be  the finite word 
		$\reg(C_0)\reg(C_1)\dots \reg(C_\rho)\reg(C_\top)$, 
		where $\reg(C_i)=\{(k,y,\reg(\eta),\vec{\mu})\mid (k,y,\eta,\vec{\mu})\in C_i\}$ with $\reg(\eta)=\int(\eta)$ if $\eta\leq\cmax$, and $\reg(\eta)=\top$ otherwise ($\int(r)$ denotes the integer part of a real number $r$).

		We define a transition relation $\to$ on the set of encodings of joint configurations and $\Sigma$ as follows: $\langle w, a, w'\rangle\hspace{-1.2mm}\in\to$ if there exists $C\in \enc^{-1}(w)$ and $C'\in \enc^{-1}(w')$ such that $\langle C,a, C'\rangle\hspace{-1.4mm}\in\transAB$. 
		We further define the equivalence relation $\sim$ by $C\sim C'$ if, and only if, $\enc(C)=\enc(C')$.
		\begin{example}
			Let $\A$ be a timed automaton with a single clock $y$, and let $\B$ be a timed one-counter net of dimension $2$ and with a single clock $x$. Assume $\cmax=2$. 
			Let $$C_2=\langle (\loc,1.3),\{(\loc_2,0.7,(1,1)),(\loc_1,1.0,(1,1)),(\loc_3,0.5,(0,1)),(\loc_1,2.2,(0,0))\}\rangle$$ be a joint configuration of $\A$ and $\B$. 
			The encoding of $C_2$ equals 
			$$\enc(C_2)=
			\{(\loc_1,x,1,(1,1))\}\{(\loc,y,1,(0,0))\}\{(\loc_3,x,0,(0,1))\}\{(\loc_2,x,0,(1,1))\}\{(\loc_1,x,\top,(0,0))\}$$
			The joint configuration 
			$$C_3=\langle(\loc,1.1),\{(\loc_2,0.9,(1,1)),(\loc_1,1.0,(1,1)),(\loc_3,0.2,(0,1)),(\loc_1,9.2,(0,0))\}\rangle$$
			has the same encoding, and thus $C_2\sim C_3$. Note that for $C_2\sim C_3$ to hold, the configurations must agree on the counter values. 
		\end{example}
		In the next lemma, we prove that $\sim$ is a time-abstract bisimulation over joint configurations. 
		The proof can be done like the proof of Prop. 11 in~\cite{DBLP:conf/lics/OuaknineW04}. 
		\begin{lemma}
			\label{lemma_bisim}
			For all joint configurations $C_1, C_2$, and $a\in\Sigma$, if $C_1\sim C_2$, then
			\begin{itemize}
			\item for all $C'_1$ such that $\langle C_1, a, C'_1\rangle\in\transAB$, there exists $C'_2$ such that $\langle C_2,a,C'_2\rangle\in\transAB$ and $C'_1\sim C'_2$,
			\item for all $C'_2$ such that $\langle C_2, a, C'_2\rangle\in\transAB$, there exists $C'_1$ such that $\langle C_1,a,C'_1\rangle\in\transAB$ and $C'_1\sim C'_2$.
			\end{itemize}
		\end{lemma}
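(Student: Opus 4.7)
The plan is to adapt the region-based time-abstract bisimulation argument of Prop.~11 in~\cite{DBLP:conf/lics/OuaknineW04}, extending it to handle the set structure of the $\B$-component of a joint configuration and the exact counter values carried by each $\B$-state. Only the forward direction is needed; the backward one is symmetric. Fix $C_1 \sim C_2$ and $\langle C_1, a, C'_1\rangle \in \transAB$, arising from some delay $\delta_1 \in \RP$ followed by an $a$-labelled discrete step.

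The first step is to construct a delay $\delta_2$ such that the time-elapsed configurations $D_1, D_2$ (obtained from $C_1, C_2$ by letting all clocks advance by $\delta_1, \delta_2$ respectively) are still $\sim$-equivalent. The equality $\enc(C_1) = \enc(C_2)$ supplies a bijection $\psi$ between the elements of $C_1$ and $C_2$ preserving locations, clock names, integer parts (with common saturation $\top$ beyond $\cmax$), fractional-part classes together with their total order, and --- crucially --- the exact counter vectors, which are unaffected by time elapse. The delay $\delta_1$ induces a finite sequence of ``boundary events'' on the elements of $C_1$ (a fractional part reaching $0$, two classes merging, or an integer part crossing some $c \le \cmax$); by the standard region-graph argument one can choose $\delta_2$ producing exactly the same ordered sequence of events on the $\psi$-corresponding elements of $C_2$, so that $D_1 \sim D_2$.

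For the subsequent discrete $a$-step, I would argue separately on the two components. On the $\A$-side, the selected edge's clock constraint depends only on the clock region class of $D_1$, hence is satisfied equally at $D_2$; the ensuing resets collapse the reset clocks into the zero-fractional/zero-integer class on both sides. On the $\B$-side, every element of the successor set, call it $\gamma'_1 = \succ^\B(\gamma_1, 0, a)$ where $\gamma_1$ denotes the $\B$-component of $D_1$, is produced by some $(\loc, v, \vec{u}) \in \gamma_1$ and some edge $(\loc, a, \phi, \vec{c}, \lambda, \loc') \in \edges^\B$, subject to $v \models \phi$ and to $\vec{u} + \vec{c} \in \N^{\mathfrak{n}}$. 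The constraint test transfers to the $\psi$-image of $(\loc, v, \vec{u})$ in the $\B$-component $\gamma_2$ of $D_2$ by region equivalence, and the non-negativity guard transfers because $\psi$ preserves \emph{exact} counter vectors. Hence $\gamma'_1$ and $\gamma'_2 = \succ^\B(\gamma_2, 0, a)$ are in bijective correspondence respecting the encoding class of each element, yielding $C'_1 \sim C'_2$.

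The main obstacle is that $\gamma_1$ may already contain many $\B$-states with pairwise distinct clock values, and together with the $\A$-clocks these can trigger an arbitrarily long boundary-event sequence during $\delta_1$. One must therefore verify that all of these events can be reproduced by a single suitable $\delta_2$, rather than only by concatenating micro-delays. This is the familiar but somewhat technical heart of the region-based bisimulation proof, which I would import essentially verbatim from~\cite{DBLP:conf/lics/OuaknineW04}; the genuinely new ingredient is the observation above that the exact counter vectors $\vec{u}$ appearing in $\enc$ are invariant under time elapse and transfer faithfully across the discrete step, so that they do not interfere with the region-based analysis of the clocks.
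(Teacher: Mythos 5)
Your proposal is correct and follows the same route as the paper, which itself gives no detail beyond the single sentence that the proof ``can be done like the proof of Prop.~11 in~\cite{DBLP:conf/lics/OuaknineW04}''. Your sketch fleshes out precisely that intended adaptation---choosing one delay $\delta_2$ reproducing the whole ordered boundary-event sequence across the combined set of $\A$-clocks and $\B$-states, and observing that the exact counter vectors recorded in $\enc$ are invariant under time elapse and transfer faithfully (together with the non-negativity guards) through the discrete step---so it matches and elaborates on the paper's argument.
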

		\noindent Next, we define a quasi-order $\sqsubseteq$ on the set of encodings of joint configurations and prove that $\sqsubseteq$ is a well-quasi-order. 
		First, define $\preceq$ on 
		$(\locs^\A \times \clocks\times \reg\times\vec{0}) \cup (\locs^\B \times \{x\}\times \reg \times \N^{\mathfrak{n}})$ 
		by $(k,y,\eta,\mu)\preceq (k',y',\eta',\mu')$ if, and only if, $k=k'$, $y=y'$, $\eta=\eta'$, and $\mu\leq^{\mathfrak{n}}\mu'$. 
		By Lemma \ref{lemma_higman}.1 and the fact that $=$ on the finite set $(\locs^\A \times \clocks\times \reg) \cup (\locs^\B \times \{x\}\times \reg)$ and $\leq^{\mathfrak{n}}$ on $\N^{\mathfrak{n}}$ are well-quasi-orders, 
		$\preceq$ is a well-quasi-order, too.
		By Lemma \ref{lemma_higman}.3, the subset order $\preceq^\mathcal{P}$ is a well-quasi-order. 
		Finally, we define $\sqsubseteq$ to be the monotone domination order on $\preceq^\mathcal{P}$, and then by Lemma \ref{lemma_higman}.2, 
		$\sqsubseteq$ is a well-quasi-order. 
		\begin{example}
			Let $w_1 = \{(\loc_1,x,1,(1,0))\}\{(\loc,y,1,(0,0))\}\{(\loc_2,x,0,(1,1))\}\rangle$.
			Then $w_1 \sqsubseteq w_2$, where $w_2=\enc(C_2)$ from the previous example. Note that the counter values in $w_1$ may be smaller than the associated counter values in $w_2$, as it is here the case with the counter values for $\loc_1$. 
		\end{example}		
		The next lemma states that $\to$ is downward-compatible with respect to $\sqsubseteq$. 
		\begin{lemma}
			\label{lemma_dc}
			If $w_1\sqsubseteq w_2$ and $\langle w_2,a, w'_2\rangle\in\to$, then there exists $w'_1$ such that $w'_1\sqsubseteq w'_2$ and $\langle w_1,a, w'_1\rangle\hspace{-1.1mm}\in\to$.
		\end{lemma}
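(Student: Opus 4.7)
My plan is to adapt the downward-compatibility argument from \cite{DBLP:conf/lics/OuaknineW04} (proof of Prop.~12) to carry the counter vectors through. First, I would unpack $\langle w_2, a, w'_2 \rangle \in \to$: fix concrete witnesses $C_2 = ((\loc, \nu), \gamma_2) \in \enc^{-1}(w_2)$ and $C'_2 \in \enc^{-1}(w'_2)$, together with a delay $\delta \in \RP$ such that $C_2 \transAB C'_2$ (using some $\A$-edge $e^\A$ and the set-successor $\gamma'_2 = \succ^\B(\gamma_2, \delta, a)$). From $w_1 \sqsubseteq w_2$ I would extract a strictly increasing subset-index map $f$ and, for each $i$, an injection witnessing $A_i \preceq^\mathcal{P} B_{f(i)}$; these assemble into a single injection $g$ from atoms of $w_1$ into atoms of $w_2$ that preserves location, clock label and region, and dominates counter vectors componentwise.

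Next I would construct a representative $C_1 = ((\loc, \nu), \gamma_1) \in \enc^{-1}(w_1)$ that sits componentwise below $C_2$. The $\A$-entries of $w_1$ and $w_2$ must agree on location and clock-region (there are exactly $|\clocks|$ of them on each side, and $\preceq$ forces equality on those components), so I reuse the $\A$-valuation $\nu$ of $C_2$. For each $\B$-atom $e_1 = (k, x, \eta, \vec{\mu}_1)$ of $w_1$, I pick a state $s_2 = (k, v, \vec{\mu}_2) \in \gamma_2$ whose encoding equals $g(e_1)$ --- injectivity of $g$ lets these witnesses be chosen pairwise distinct inside $\gamma_2$ --- and add $s_1 = (k, v, \vec{\mu}_1)$ to $\gamma_1$. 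Since all concrete clock values in $C_1$ are inherited from $C_2$, the fractional-part clustering of $\enc(C_1)$ is exactly the restriction of that of $w_2$ to the indices in the image of $f$, which is precisely the subset structure of $w_1$; combined with the matching of locations and regions, this gives $\enc(C_1) = w_1$.

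For the transition I would reuse the same $\delta$ and $a$. The $\A$-guard of $e^\A$ is still satisfied at $\nu + \delta$, so $e^\A$ produces $q'_1 = q'_2$. Any $\B$-edge $(k, a, \phi, \vec{c}, \lambda, k')$ enabled at $s_1 \in \gamma_1$ under $\delta$ is also enabled at its mate $s_2 \in \gamma_2$, since the clock guard is identical and $\vec{\mu}_2 + \vec{c} \geq \vec{\mu}_1 + \vec{c} \geq \vec{0}$; the resulting successors $s'_1, s'_2$ share the new clock value $(v+\delta)[\lambda \defeq 0]$ and satisfy $\vec{\mu}_1 + \vec{c} \leq \vec{\mu}_2 + \vec{c}$. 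Setting $C'_1 = (q'_2, \succ^\B(\gamma_1, \delta, a))$ and $w'_1 = \enc(C'_1)$, the pairing $s'_1 \mapsto s'_2$ yields atom-level $\preceq$-domination in each fractional-part subset, and the ambient subset-index order is preserved since every fractional part occurring in $C'_1$ already occurs in $C'_2$. The delicate step I would spend the most care on is verifying that this induced atom map is actually injective within each subset, as required by $\preceq^\mathcal{P}$: distinct encoded $\gamma'_1$-atoms in the same subset (which can differ only by counter vector) must be routed to distinct encoded $\gamma'_2$-atoms. The distinct witnesses $s_2$ chosen for distinct $e_1$ propagate through the same edge to distinct $\gamma'_2$-successors, and an edge-by-edge case analysis showing that the counter coordinate keeps these successors separated on the encoded level --- exploiting that $g$ supplies one counter-dominating atom in $w_2$ per atom of $w_1$ --- closes the injectivity requirement and hence delivers $w'_1 \sqsubseteq w'_2$.
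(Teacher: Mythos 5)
Your proposal is correct and follows essentially the same route as the paper: restrict $\gamma_2$ to a subset matching $w_1$, lower the counter vectors, fire the transition with the same delay $\delta$, and use monotonicity of the counter updates (lowering counters only disables $\B$-edges and yields componentwise dominated successors) to conclude $w'_1\sqsubseteq w'_2$. The only cosmetic difference is that you build a concrete representative of $w_1$ inside $C_2$ and run the transition on it directly, whereas the paper forms the lowered sub-configuration $\gamma_{minus}$ of $C_2$ and then transfers the transition to an arbitrary representative of $w_1$ via the time-abstract bisimulation of Lemma~\ref{lemma_bisim}.
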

		\begin{proof}
			The proof is similar to the proof of Lemma 15 in~\cite{DBLP:conf/lics/OuaknineW04}.
			
			Let $w_1, w_2, w_2'$ be such that $w_1\sqsubseteq w_2$ and $\langle w_2,a,w'_2\rangle\in\to$.
			Further let 
			$C_1=((\loc_1,\nu_1),\gamma_1)\in \enc^{-1}(w_1)$, 
			$C_2=((\loc_2,\nu_2),\gamma_2)\in\enc^{-1}(w_2)$, 
			and $C'_2=((\loc'_2,\nu'_2),\gamma'_2)\in\enc^{-1}(w'_2)$ be such that \newline$\langle C_2,a, C'_2\rangle\in\transAB$. 
			This implies that 
			there exists $\delta\in\RP$ with
			$(\loc'_2,\nu'_2)\in\succ^\A((\loc_2,\nu_2),\delta,a)$ and
			$\gamma'_2=\succ^\B(\gamma_2,\delta,a)$.

			
			Since $w_1\sqsubseteq w_2$, 
			we know that there exists a set $\gamma_{minus}$ of states in  $\B$  such that
			$((\loc_2,\nu_2),\gamma_{minus})\sim ((\loc_1,\nu_1),\gamma_1)$ ($\star$): $\gamma_{minus}$ is obtained from choosing a suitable set $\gamma''_2\subseteq \gamma_2$
			such that the encoding of $((\loc_2,\nu_2),\gamma''_2)$ differs from the encoding of $((\loc_1,\nu_1),\gamma_1)$ only in that the vectors representing the counter values occurring in $\gamma''_2$ may be greater (with respect to $\leq^{\mathfrak{n}}$) than the corresponding vectors in $\gamma_1$.  
			$\gamma_{minus}$  is then the result from subtracting suitable values from the vectors in $\gamma''_2$ so that the encoding is equal. 


			Now let $\gamma'_{minus} = \succ^\B(\gamma_{minus},\delta,a)$. 
			Then $\langle (\loc_2,\nu_2),\gamma_{minus}),a,(\loc'_2,\nu'_2),\gamma'_{minus})\rangle\in\transAB$ ($\star\star$). 
			Note that we can add suitable values to the vectors in $\gamma'_{minus}$ to obtain $\gamma'_{minus}\subseteq \gamma'_2$ ($\star\star\star$).


			From $(\star)$ and $(\star\star)$ it follows by Lemma \ref{lemma_bisim} that there exists
			$C'_1=((\loc'_1,\nu'_1),\gamma'_1)$ such that 
			$\langle C_1,a,C'_1\rangle\in\transAB$ and 
			$C'_1 \sim ((\loc'_2,\nu'_2),\gamma'_{minus})$. 
			From the former it follows that
			$\langle w_1, a, w'_1\rangle\in\to$, where $w'_1=\enc(C'_1)$. 
			From the second and $(\star\star\star)$ it follows that  $w'_1\sqsubseteq w'_2$.
		\end{proof}
		Intuitively, 
		comparing the situation for timed counter nets with the situation for pure timed automata like in Lemma 15 in~\cite{DBLP:conf/lics/OuaknineW04}, 
		there may now be transitions that can be executed from configurations encoded by $w_2$, but that are blocked from configurations encoded by $w_1$ due to the fact that counter values are too small. This, however, does not cause any trouble, because it results in smaller sets of successor configurations, and thus leading to $w'_1\sqsubseteq w'_2$. 
		\begin{remark}
		Note that Lemma  \ref{lemma_dc} does not hold if the counters in $\B$ can be tested for zero. 
		For instance, consider $w_1$ and $w_2$ from the previous example. Assume that in $\B$ there are no edges with source location $\loc_2$ and $\loc_3$, and the only suitable $a$-labelled edge with source location $\loc_1$ 
		does a zero test on the second counter and leaves all other components unchanged. 
		This yields  
		$\succ^\B(\gamma_2,0,a)=\emptyset$ and $\succ^\B(\gamma_1,0,a)=\{(\loc_1,x,1,(1,0))\}$, 
		where $\gamma_i$ is the set of states of $\B$ in some configuration $C_i$ with $C_i\in\enc^{-1}(w_i)$ for $i=1,2$.
		Assume there is an $a$-labelled edge in $\A$ with source location $\loc$ and leaving all components unchanged. 
		Then we have
		$\langle w_2, a, w'_2\rangle\in\to$ with  $w'_2=\emptyset \{(\loc,y,1,(0,0)\}\emptyset$, 
		$\langle w_1, a, w'_1\rangle\in\to$ with 
		$w'_1 = \{(\loc_1,x,1,(1,0))\}\{(\loc,y,1,(0,0))\}\emptyset$. 
		Note that $w'_1\sqsubseteq w'_2$ does \emph{not} hold. 		
		Indeed, the universality problem of (even untimed) one-counter automata is undecidable~\cite{DBLP:journals/jacm/Greibach69,DBLP:journals/mst/Ibarra79}. 
		In Section 6, we prove that this is also the case for timed visibly pushdown one-counter automata. 
		This gives us the precise decidability border for the universality problem. 
	\end{remark}
		Finally, we describe the algorithm to decide $L(\A)\subseteq L(\B)$. 
		Like in~\cite{DBLP:conf/lics/OuaknineW04}, 
		we solve the language inclusion problem by solving the following reachability problem: in the implicit graph of the encoding of joint configurations and the transition relation $\to$,
		is there a path from the encoding of one of the finitely many  initial joint configuration to the encoding of a bad joint configuration?
		Note that we have $L(\A)\subseteq L(\B)$ if, and only if, there is no such path. 
		For solving the reachability problem, we compute the unfolding of the graph, 
		starting the computation with the encoding of an initial joint configuration.
		If for the current node labelled by $w$, there is along the branch already a node labelled with $w'$ and $w'\sqsubseteq w$, 
		then by Lemma \ref{lemma_dc} we can prune the tree after the current node: 
		Assume that from $w$ we can reach a word $w_1$ that represents a bad configuration,
		then by Lemma \ref{lemma_dc} we can reach a word $w'_1$ from $w'$ such that $w'_1\sqsubseteq w_1$, and hence, $w'_1$ is representing a bad configuration, too. 		
		By the facts that the unfolding is finitely branching, $\sqsubseteq$ is a well-quasi-order and by K\"onig's Lemma, we know that the computation will finally terminate. 
	\end{proof}

	\subsection{On the Expressiveness of Timed Counter Nets}
	Theorem \ref{theorem_main_dec} generalizes a result by Ouaknine and Worrell on the decidability of the language inclusion problem $L(\A)\subseteq L(\B)$ for $\B$ being a one-clock timed automaton without counters~\cite{DBLP:conf/lics/OuaknineW04}.
	Clearly, timed counter nets are more expressive than timed automata; for instance, the timed language accepted by the timed one-counter net on the left hand side of Figure \ref{tocn_L_2} cannot be accepted by any timed automaton, because its projection on $\Sigma^+$ equals $\{a^nb^m\mid n\geq m, n\geq 1\}$, \ie, a non-regular language. 
		
		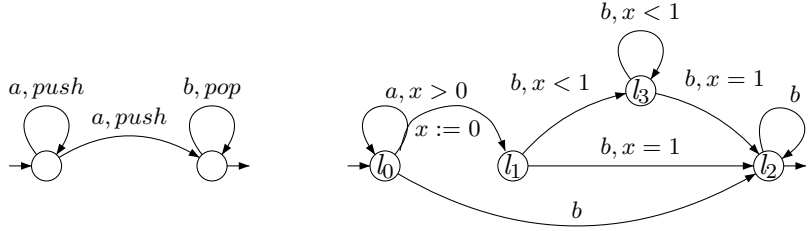
\begin{figure}
\begin{center}
		\begin{picture}(100,25)(0,-25)
\node[NLangle=0.0,Nmarks=i,flength=3,ilength=3,Nw=4.0,Nh=4.0,Nmr=2.0](n0)(5.0,-16.0){}
\node[NLangle=0.0,Nmarks=f,flength=3,Nw=4.0,Nh=4.0,Nmr=2.0](n1)(27.0,-16){}
\drawedge[curvedepth=4.0](n0,n1){\footnotesize{$a,push$}}
\drawloop[loopdiam=6](n0){\footnotesize{$a,push$}}
\drawloop[loopdiam=6](n1){\footnotesize{$b,pop$}}

\node[NLangle=0.0,Nmarks=i,ilength=3,Nw=4.0,Nh=4.0,Nmr=2.0](m0)(50.0,-16.0){$\loc_0$}
\node[NLangle=0.0,Nw=4.0,Nh=4.0,Nmr=2.0](m1)(67.0,-16){$\loc_1$}
\node[NLangle=0.0,Nw=4.0,Nh=4.0,Nmr=2.0](m3)(84,-6){$\loc_3$}
\node[NLangle=0.0,Nmarks=f,flength=3,Nw=4.0,Nh=4.0,Nmr=2.0](m2)(101,-16){$\loc_2$}

\drawedge[curvedepth=2.0](m1,m3){\footnotesize{$b,x<1$}}
\drawloop[loopdiam=6](m3){\footnotesize{$b,x<1$}}
\drawedge[curvedepth=2.0](m3,m2){\footnotesize{$b,x=1\ \ $}}

\drawedge(m1,m2){\footnotesize{$b,x=1$}}
\drawedge[curvedepth=-8.0](m0,m2){\footnotesize{$b$}}
\drawloop[loopdiam=6,loopangle=70](m2){\footnotesize{$b$}}

\drawloop[loopdiam=6,loopCW=n](m0){}

\put(50,-7){\footnotesize{$a,x>0$}}
\put(54,-12){\footnotesize{$x:=0$}}
\put(52,-18){
    \unitlength=4mm
   
    \drawcurve[AHnb=0,AHnb=1](0,1)(0.4,2)(2,2.5)(3,2)(3.5,1)
  }
\end{picture}
\caption{From left to right: A timed one-counter net and an alternating one-clock timed automaton}
\label{tocn_L_2}
\end{center}
\end{figure}
However, the result in~\cite{DBLP:conf/lics/OuaknineW04} 
was also generalized to another extension of timed automata called \emph{alternating} one-clock timed automata~\cite{DBLP:conf/lics/OuaknineW05,DBLP:journals/tocl/LasotaW08}.
An alternating one-clock timed automaton allows for two modes of branching, namely existential branching and universal branching, represented by disjunction and conjunction, respectively. 
For example, the alternating one-clock timed automaton on the right hand side of Figure \ref{tocn_L_2} has a universal branching transition in $\loc_0$ for the input letter $a$, formally expressed by $\delta(\loc_0,a)=x>0\wedge(\loc_0 \wedge x.\loc_1)$; and it has an existential branching transition in $\loc_1$ for the input letter $b$, formally $\delta(\loc_1,b)=(x<1\wedge \loc_3)\vee(x=1\wedge\loc_2)$  (see~\cite{DBLP:conf/lics/OuaknineW05} for more details). 
This alternating one-clock timed automaton accepts the timed language consisting of a sequence of $a$'s followed by a sequence of $b$'s such that the time sequence belonging to the $a$-sequence is strictly monotonic, and every $a$ is followed by some $b$ after exactly one time unit. Note that the projection on $\Sigma^+$ thus equals $\{a^n b^m \mid m\geq n, m\geq 1\}$.  

We prove that timed one-counter nets with one clock and alternating one-clock timed automata are incomparable in expressive power. 
\begin{theorem}
	Timed one-counter one-clock nets and alternating one-clock timed automata are incomparable in expressiveness.
\end{theorem}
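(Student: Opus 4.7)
My plan is to exhibit a separating language for each direction, using the two automata displayed in Figure~\ref{tocn_L_2}. Let $L_1$ denote the language of the timed one-counter net on the left and $L_2$ that of the alternating one-clock timed automaton on the right. The goal is to show that $L_1$ is not accepted by any alternating one-clock timed automaton, and that $L_2$ is not accepted by any one-clock timed one-counter net.

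For the first direction I would restrict attention to timed words with strictly increasing integer timestamps. Suppose $\A$ is an alternating one-clock timed automaton with maximum constant $\cmax$ accepting $L_1$. On such inputs the clock of $\A$ always takes integer values, and since all atomic constraints mention only integers bounded by $\cmax$, the clock value can be abstracted into the finite set $\{0,1,\dots,\cmax,\top\}$. Hence $\A$ restricted to these inputs behaves as an alternating finite automaton over the finite state space $\locs\times\{0,1,\dots,\cmax,\top\}$, whose language is therefore regular. However, the untimed projection of $L_1$ on such inputs is $\{a^n b^m\mid n\geq m\geq 0,\ n\geq 1\}$, a classical non-regular language, yielding the desired contradiction.

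For the second direction I would use a perturbation argument on the word $w=(a,1/3)(a,2/3)(b,4/3)(b,5/3)\in L_2$. Assume for contradiction that a one-clock timed one-counter net $\mathcal{N}$ accepts $L_2$ and fix an accepting run of $\mathcal{N}$ on $w$. Enumerating the four possible reset patterns on the $a$-prefix shows that the clock value at the first $b$-event lies in $\{2/3,1,4/3\}$, while (also accounting for a possible reset at the first $b$) at the second $b$-event it lies in $\{1/3,1,4/3,5/3\}$. The only integer value in each set is $1$, and it is realised exactly by resetting at $(a,1/3)$ alone in the first case, and by resetting at $(a,2/3)$ with no further reset at the first $b$ in the second case; these reset choices are mutually incompatible for a single clock. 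Consequently, at least one of the two $b$-transitions must carry a constraint whose satisfaction region in $\mathbb{R}$ has non-empty interior (equivalently, no atom forces $x$ to be the unique integer value attained). Perturbing the timestamp of the corresponding $b$-event by a sufficiently small $\delta$ then yields a timed word still accepted by $\mathcal{N}$ via the same sequence of resets but no longer in $L_2$, since the perturbed $b$-time no longer matches $a+1$ for any $a$ in the prefix, contradicting $L(\mathcal{N})=L_2$.

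The main obstacle is the case analysis in the second direction: one must carefully verify that resets performed on $b$-events, as well as on $a$-events, cannot combine to place both $b$-transitions on singleton satisfaction sets. The counter of $\mathcal{N}$ does not affect the real-valued clock dynamics and its values remain unchanged under the perturbation, so it does not interfere with the argument. The first direction is considerably simpler, reducing directly to the classical regularity of alternating finite automaton languages.
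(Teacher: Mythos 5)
Your proof is correct, and both halves take a genuinely different route from the paper's. For the direction that the net language $L_{\geq}$ is not recognizable by any alternating one-clock timed automaton, the paper packs all timestamps into the open unit interval, argues that only constraints of the form $x\sim 0$ can then matter, converts the behaviour on the $b$-block into a deterministic finite automaton with $2^{2^{2n}}$ states, and pumps the $b$-block via the eventual periodicity of the maps $f^{\B}_n$. You instead restrict to integer timestamps, so the clock only ever holds integer values, the abstraction to $\{0,1,\dots,\cmax,\top\}$ turns $\A$ into an alternating finite automaton, and the regularity of AFA languages contradicts the non-regularity of $\{a^nb^m\mid n\geq m,\ n\geq 1\}$; this is shorter and avoids both the double-exponential determinization and the pumping argument. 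The one point to tighten is that with arbitrary integer delays the induced AFA would need the (truncated) delay as part of its input letter; fixing all delays to $1$ removes the issue and costs nothing, since membership of such words in $L_{\geq}$ depends only on the untiming. For the other direction the paper offers only the one-line remark ``due to the lack of zero tests,'' so your perturbation argument on $(a,1/3)(a,2/3)(b,4/3)(b,5/3)$ actually supplies a detailed proof where the paper gives a hint: the case analysis correctly shows that a single clock cannot sit at an integer value at both $b$-events (value $1$ at the first $b$ forces a reset at the first $a$ and none at the second, while value $1$ at the second $b$ forces a reset at the second $a$), and perturbing the $b$-event whose clock value is non-integral keeps the fixed run of $\mathcal{N}$ valid (including the guard of the second $b$ when the first $b$ resets the clock, since the resulting value $1/3$ is again non-integral and guards are open around non-integers) while destroying the exact one-time-unit match required by $L_2$.
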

\begin{proof}
	On the hand, due to the lack of zero tests, 
	the timed language accepted by the alternating one-clock timed automaton on the right hand side of Figure \ref{tocn_L_2} cannot be accepted by any timed one-counter net. 
	On the other hand, 
	we prove that the timed language $L_{\geq}$ accepted by the timed counter net on the left side of Figure \ref{tocn_L_2} cannot be accepted by any alternating one-clock timed automaton, as we will prove in the following.

	We start with some simple facts about deterministic finite automata. 	
	Let $\B$ be a deterministic finite automaton over the singleton alphabet $\{b\}$ and with a set of states denoted by $Q$. 
	For every $n\in\N$, 
	we define a function $f^\B_n:Q\to Q$ such that $f^\B_n(q)=q'$ means that if $\B$ starts in state $q$ to read the word $b^n$, then $\B$ ends in $q'$. 
	Clearly, there exist natural numbers $k,m \ge 1$ such that $f^\B_{k}=f^\B_{k+m}$. 
	By determinism of $\B$ we further have $f^\B_{k+i}=f^\B_{k+m+i}$ for every $i\ge 1$. 
	
	Now let $\{\B_1,\dots,\B_n\}$ be a finite set of deterministic finite automata. 
	For every $i\in\{1,\dots,n\}$, let $k_{i}$ and $m_{i}$ be such that $f^{\B_i}_{k_i}=f^{\B_i}_{k_i+m_i}$. 
	Set $k=\max\{k_1,\dots,k_n\}$, and let $m$ be the least common multiple of $m_1,\dots,m_n$. 
	One can easily prove that for every $i\in\{1,\dots,n\}$ we have $f^{\B_i}_k = f^{\B_i}_{k+m}$, and, again by determinism, 
	$f^{\B_i}_{k+j} = f^{\B_i}_{k+m+j}$ for every $j\geq 1$.

	Assume by contradiction that $L_{\geq}$ is accepted by an alternating  one-clock timed automaton $\A$.
	Assume $\A$ has $n$ locations.
	Let $B$ be the set of all deterministic finite automata over $\{b\}$ with at most $2^{2^{2n}}$ states. Note that $B$ is finite up to equivalent behaviour. 
	Now choose $k,m \ge 1$ as explained above and such that 
	$f^\B_{k+j}=f^\B_{k+m+j}$ for every $\B\in B$ and $j\ge 0$.


	Define $\delta = {1\over {2k+2m+2}}$. 
	Define $w_1 = (a^{k+m} b^{k+1},\tau)$
	and define $w_2 = (a^{k+m} b^{k+m+1},\tau')$, 	
	where $\tau = t_1 t_2 \dots t_{2k+m+1}$ with $t_i=i \cdot \delta$ for every $i\in\{1,\dots, 2k+m+1\}$, 
	and, similarly, $\tau' = t'_1 t'_2 \dots  t'_{2k+2m+1}$ with $t'_{i}=i\cdot\delta$ for every $i\in\{1,\dots, 2k+2m+1\}$.
	Note that $w_1\in L_\geq$ and $w_2\not\in L_\geq$.
	Further note that $0<t_i<1$ and $0<t'_i<1$ for all $i\in\{1,2k+2m+1\}$.

	We prove that $w_1\in L(\A)$ if, and only if, $w_2\in L(\A)$, \ie, $\A$ cannot distinguish between $w_1$ and $w_2$.
	
	First assume that $w_1\in L(\A)$.	
	Let $\gamma$ be the configuration that $\A$ is in after reading $a^{k+m}$. 
	Clearly, all states $(\loc,x)$ in $\gamma$ satisfy $0\leq x<1$. 
	Let $\rho$ be the run of $\A$  that starts from $\gamma$ on the suffix of $w_1$ that contains the $k+1$ many $b$'s. 
	All clock constraints occurring in transitions of $\rho$ are of the form $x\sim c$ for some $c\in\N$, and by the choice of $t_i$, 
	the only  clock constraints that are relevant for the acceptance of $w_1$ are those with $c$ equal to $0$. The satisfaction of constraints of the form $x\sim 0$ may depend on preceding resets of the clock $x$; however, even with clock resets occurring in $\rho$, the clock constraint $x=0$ cannot be satisfied anywhere in $\rho$ because the time delays between the $b$'s are always greater than $0$.
	In other words, $\A$ behaves on the (untimed) word $b^{k+1}$ like an alternating automaton without a clock, but with an additional flag telling whether there was a reset on the clock or not. This, however, is equivalent to the behaviour of a  deterministic finite automaton with $2^{2^{2n}}$ states on the (untimed) word $b^{k+1}$. 
	But then, by the choice of $k$ and $m$, 
	we know that starting from $\gamma$, $\A$ also accepts $b^{k+m+1}$, and thus $w_2\in L(\A)$. 	
	The proof for the other direction is analogous. 
\end{proof}

		\section{The Universality Problem for Visibly One-Counter Automata}
		We prove that allowing zero tests in a one-clock timed visibly one-counter net results in the undecidability of the universality problem.
		The undecidability of the universality problem for the more general class of one-clock visibly pushdown automata was already stated in Theorem 3 in~\cite{EmmiM06}.
		The proof in~\cite{EmmiM06} is a reduction of the halting problem for two-counter machines. 
		Given a two-counter machine $\tcm$, 
		one can define a timed language $L(\tcm)$ that consists of all timed words encoding a halting computation of $\tcm$. 
		Then a timed visibly pushdown automaton $\A$ is defined that accepts the complement of $L(\tcm)$. 
		Altogether, $L(\A)=T\Sigma^+$ if, and only if, $\tcm$ does not have a halting computation. 
		The definition of $L(\tcm)$ is similar to the definition of $L(\cm)$ in the proof of Theorem \ref{theorem_main_lang_inc}. 
		Recall that in the definition of $L(\cm)$ we did not include a condition that requires every symbol to have a matching symbol two time units \emph{before}, and, as we mentioned, 
		this is the reason for $L(\cm)$ to contain timed words encoding \emph{faulty} computations of $\cm$. 
		However, in the definition of $L(\tcm)$ in~\cite{EmmiM06}, such a ``backward-looking'' condition is used. In the proof in~\cite{EmmiM06}, it is unfortunately not clear how the  one-clock timed visibly pushdown automaton $\A$ can detect violations of this condition\footnote{More detailed, it is not clear how to construct one-clock timed automata $N_{\neg f_r \leftarrow f_c}$ and $N_{\neg g_r\leftarrow g_c}$ mentioned on p. 10 in~\cite{EmmiM06}. Recall that in the proof for undecidability of the universality problem for timed automata with two or more clocks, it is exactly this backward-looking condition that requires \emph{two} clocks~\cite{DBLP:conf/formats/AdamsOW07}.}. 
		
		Here, we give a complete proof for the subclass of timed visibly one-counter automata. 
		Like the proof of Theorem \ref{theorem_main_lang_inc}, the proof is a reduction of the control state reachability problem for channel machines. 
		We however remark that one can similarly use a reduction of the halting problem for two-counter machines.  		
		
		\
	
		\noindent
		{\bf Proof of Theorem \ref{theorem_univ_undec}} 
		Let $\cm=(S,s_I,M,\Delta)$ be a channel machine, and let $s_F\in S$. 
		Define $\Sigma$ in the same way as in the proof of Theorem \ref{theorem_main_lang_inc}.
		For every $n\in\N$, we define a timed language $L_{\mathsf{ef}}(\cm,n)$
		that 
		consists of all timed words over $\Sigma$ that encode 
		computations of $\cm$ from $(s_I,\varepsilon)$ to $(s_F,x)$ for some $x\in M^*$. But in contrast to the proof of Theorem \ref{theorem_main_lang_inc}, 
		$L(\cm,n)$ will only contain encodings of \emph{error-free} computations of $\cm$.

		Formally, $L_{\mathsf{ef}}(\cm,n)$ is defined using the same conditions as the ones for $L(\cm,n)$ in the proof of Theorem \ref{theorem_main_lang_inc} plus an additional condition that requires the number of wildcard symbols in the enoding of the last configuration to be equal to $n$:		
		\begin{enumerate}
		\addtocounter{enumi}{9} 
	\item Between $s_F$ and $\star$, the wildcard symbol $-$ occurs exactly $n$ times. 
		\end{enumerate}	
		Recall that the conditions (1) to (9) guaranteed that the length of consecutive encodings cannot decrease, \ie, the length of every encoding is at least $n+1$. However, insertion errors may occur, leading to an increase of the length of the encoding and all consecutive encodings.  
		But by the new condition (10), we can exclude the occurrence of such insertion errors. 	We thus have for $L_{\mathsf{ef}}(\cm)=\bigcup_{n\geq 1}L_{\mathsf{ef}}(\cm,n)$:
		\begin{lemma}
			\label{lemma_univ_automata}
			There exists some error-free computation of $\cm$ from $(s_I,\varepsilon)$ to $(s_F,x)$ for some $x\in M^*$, if, and only if, $L_{\mathsf{ef}}(\cm)\neq\emptyset$. 
		\end{lemma}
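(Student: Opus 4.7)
The plan is to prove both directions by carefully tracking the length of the encoding of each configuration along the timed word, and exploiting the crucial non-decreasing-length observation that was already established for the conditions (1)--(9) defining $L(\cm,n)$ in Section~4.

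For the forward direction, suppose $\gamma$ is an error-free computation of $\cm$ from $(s_I,\varepsilon)$ to $(s_F,x)$ for some $x\in M^*$. Let $n$ be the maximum length of the channel content encountered during $\gamma$. As observed in Section~4, under the wildcard semantics the channel length is constantly $n$ throughout $\gamma$, so the encoding of each configuration occupies exactly $n+1$ symbols (one control state plus $n$ message-or-wildcard symbols). I would construct the timed word $w$ by following the standard encoding recipe used in the proof of Theorem~\ref{theorem_main_lang_inc} (picking any strictly monotonic timestamps consistent with conditions (1)--(9)); since $\gamma$ is error-free, no symbol needs to appear ``out of nowhere'', so every symbol in the encoding of a non-initial configuration is a matching copy of a symbol two time units earlier. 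In particular, the final configuration's encoding contains exactly $n$ message/wildcard positions, all of which become $-$ by conditions (7'),(8'),(9'); this yields exactly $n$ occurrences of $-$ between $s_F$ and $\star$, so condition (10) is satisfied and $w\in L_{\mathsf{ef}}(\cm,n)$.

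For the backward direction, suppose $w\in L_{\mathsf{ef}}(\cm,n)$ for some $n\in\N$. By condition (2), $w$ begins with $s_I$ followed by exactly $n$ occurrences of $+$ before $empty?$, so the encoding of the initial configuration has length $n+1$. By condition (10), the encoding of the last configuration also has length $n+1$ (its $n$ message/wildcard symbols are all $-$). As noted in Section~4, conditions (1)--(9) imply that the length of the encoding of consecutive configurations cannot decrease. Combining these facts forces every configuration's encoding to have length exactly $n+1$. Hence between any two consecutive configurations the number of message/wildcard positions is preserved, so the forward-matching conditions (7)--(9) are in fact \emph{bijective} matchings with no ``orphan'' symbols in the later encoding: every symbol in the next configuration is either the copy of a preceding symbol or the result of a legal send/read/$empty?$ operation as specified. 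Decoding $w$ back into a sequence of channel configurations then yields an error-free computation of $\cm$ from $(s_I,\varepsilon)$ to $(s_F,x)$ for some $x\in M^*$.

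The main obstacle is the backward direction, specifically making precise the informal ``length cannot decrease'' observation and using it to rule out insertion errors. The delicate point is that conditions (7)--(9) are stated with a \emph{strictly increasing} function $f$ from the previous encoding into the next and do not themselves forbid extra symbols in the next encoding; only the global length-count extracted from (2) and (10) forces $f$ to be surjective, which is what excludes insertion errors. I would make this explicit by an easy induction on the number of configurations, showing that at each step the matching function $f$ between consecutive configurations must be a bijection, so the extracted sequence lies in $\to_\cm$ rather than merely in $\leadsto_\cm$.
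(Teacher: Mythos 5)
Your proposal is correct and follows essentially the same route as the paper: the paper's own justification is the short paragraph preceding the lemma (conditions (1)--(9) force non-decreasing encoding lengths, condition (10) pins the last encoding's length to $n+1$, hence all lengths equal $n+1$ and no insertion errors occur), mirroring the proof of Lemma~\ref{lemma_intersection} with condition (10) playing the role of the counter of $\A$. Your added observation that the global length count is what forces the matching functions $f$ to be surjective is a faithful, slightly more explicit rendering of the same argument.
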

			
		Next, we define a timed visibly one-counter automaton with a single clock such that $L(\A)=T\Sigma^+\backslash L_{\mathsf{ef}}(\cm)$. 
		Hence, by the preceding lemma, $L(\A)\neq T\Sigma^+$ if, and only if, there exists some error-free computation of $\cm$ from $(s_I,\varepsilon)$ to $(s_F,x)$ for some $x\in M^*$.
		
		$\A$ is the union of $\B$ defined in the proof of Theorem \ref{theorem_main_lang_inc} and the visibly one-counter automaton shown in Figure \ref{figure_visibly_one_counter_automaton}. The latter accepts timed words violating the new condition (10): 
		The automaton non-deterministically guesses the maximum number $n$ of occurrences of the symbol $+$. 
		When leaving $\loc_1$, 
		the value of the counter is $n+1$. 
		The final location $\loc_4$, however, 
		can only be reached while reading $-$ or $\star$ if the value of the counter is zero. This means that the encoding of the last configuration contains at least one symbol more than the encoding of the initial configuration. 
	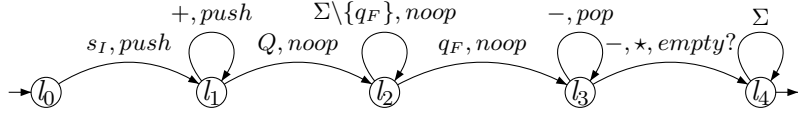
\begin{figure}
\begin{center}
		\begin{picture}(107,14)(0,-14)
\node[NLangle=0.0,Nmarks=i,ilength=3,Nw=4.0,Nh=4.0,Nmr=2.0](n0)(5.0,-11.0){$\loc_0$}
\node[NLangle=0.0,Nw=4.0,Nh=4.0,Nmr=2.0](n1)(27.0,-11){$\loc_1$}
\node[NLangle=0.0,Nw=4.0,Nh=4.0,Nmr=2.0](n2)(50.0,-11){$\loc_2$}
\drawedge[curvedepth=4.0](n0,n1){\footnotesize{$s_I,push$}}
\drawloop[loopdiam=6](n1){\footnotesize{$+,push$}}
\drawedge[curvedepth=4.0](n1,n2){\footnotesize{$Q,noop$}}
\drawloop[loopdiam=6](n2){\footnotesize{$\Sigma\backslash\{q_F\},noop$}}
\node[NLangle=0.0,Nw=4.0,Nh=4.0,Nmr=2.0](n3)(76.0,-11){$\loc_3$}
\node[NLangle=0.0,Nmarks=f,flength=3,Nw=4.0,Nh=4.0,Nmr=2.0](n4)(100.0,-11){$\loc_4$}
\drawedge[curvedepth=4.0](n2,n3){\footnotesize{$q_F,noop$}}
\drawloop[loopdiam=6](n3){\footnotesize{$-, pop$}}
\drawedge[curvedepth=4.0](n3,n4){\footnotesize{$-,\star, empty?$}}
\drawloop[loopdiam=6](n4){\footnotesize{$\Sigma$}}
\end{picture}
\caption{The  timed visibly one-counter automaton  for recognizing timed words violating the additional ``backwards-looking'' condition of $L_{\mathsf{ef}}(\cm)$. }
\label{figure_visibly_one_counter_automaton}
\end{center}
\end{figure}

\section{Conclusion and Open Problems}
The main conclusion of this paper is that even for very weak extensions of timed automata with counters it is impossible to automatically verify whether a given specification is satisfied. 
On the other hand, we may use one-clock timed counter nets as specifications to verify timed automata.
The results on the expressive power of timed counter nets in Sect. 5.2 show that this increases so far known possibilities for the verification of timed automata. 

An interesting problem is to figure out a (decidable) extension of LTL that is capable of expressing properties referring to both time and discrete data structures. 

We remark that all our results hold for automata defined over \emph{finite} timed words. 
We cannot expect the decidability of, \eg, the universality problem for one-clock timed counter nets over infinite timed words, as the same problem is already undecidable for the subclass of one-clock timed automata~\cite{DBLP:journals/fuin/AbdullaDOQW08}.

\subsection*{Acknowledgements}
I would like to thank Michael Emmi and Rupak Majumdar for helpful discussions on their work on timed pushdown automata. 
I further would like to thank James Worrell  very much for pointing me to MTL's capability of encoding faulty computations of channel machines.



\begin{thebibliography}{Kos97}

\bibitem{DBLP:conf/concur/AbdullaC98}
Parosh~Aziz Abdulla and K\={a}rlis \v{C}er\={a}ns.
\newblock Simulation is decidable for one-counter nets (extended abstract).
\newblock In Davide Sangiorgi and Robert de~Simone, editors, {\em CONCUR},
  volume 1466 of {\em LNCS}, pages 253--268. Springer, 1998.

\bibitem{DBLP:journals/fuin/AbdullaDOQW08}
Parosh~Aziz Abdulla, Johann Deneux, Jo{\"e}l Ouaknine, Karin Quaas, and James
  Worrell.
\newblock Universality analysis for one-clock timed automata.
\newblock {\em Fundam. Inform.}, 89(4):419--450, 2008.



\bibitem{DBLP:journals/tcs/AbdullaJ03}
Parosh~Aziz Abdulla and Bengt Jonsson.
\newblock Model checking of systems with many identical timed processes.
\newblock {\em Theor. Comput. Sci.}, 290(1):241--264, 2003.




\bibitem{DBLP:conf/formats/AdamsOW07}
Sara Adams, Jo{\"e}l Ouaknine, and James Worrell.
\newblock Undecidability of universality for timed automata with minimal
  resources.
\newblock In Jean-Fran\c{c}ois Raskin and P.~S. Thiagarajan, editors, {\em
  FORMATS}, volume 4763 of {\em LNCS}, pages 25--37. Springer, 2007.

\bibitem{AD94}
Rajeev Alur and David~L. Dill.
\newblock A theory of timed automata.
\newblock {\em Theoretical Computer Science}, 126(2):183--235, 1994.

\bibitem{DBLP:conf/sfm/AlurM04}
Rajeev Alur and P.~Madhusudan.
\newblock Decision problems for timed automata: A survey.
\newblock In Marco Bernardo and Flavio Corradini, editors, {\em SFM}, volume
  3185 of {\em LNCS}, pages 1--24. Springer, 2004.

\bibitem{DBLP:conf/stoc/AlurM04}
Rajeev Alur and P.~Madhusudan.
\newblock Visibly pushdown languages.
\newblock In L{\'a}szl{\'o} Babai, editor, {\em STOC}, pages 202--211. ACM,
  2004.

\bibitem{DBLP:conf/hybrid/BouajjaniER94}
Ahmed Bouajjani, Rachid Echahed, and Riadh Robbana.
\newblock On the automatic verification of systems with continuous variables
  and unbounded discrete data structures.
\newblock In Panos~J. Antsaklis, Wolf Kohn, Anil Nerode, and Shankar Sastry,
  editors, {\em Hybrid Systems}, volume 999 of {\em LNCS}, pages 64--85.
  Springer, 1994.

\bibitem{DBLP:journals/entcs/BouchyFS09}
Florent Bouchy, Alain Finkel, and Arnaud Sangnier.
\newblock Reachability in timed counter systems.
\newblock {\em Electr. Notes Theor. Comput. Sci.}, 239:167--178, 2009.

\bibitem{DBLP:conf/hybrid/BouyerFLM10}
Patricia Bouyer, Uli Fahrenberg, Kim~G. Larsen, and Nicolas Markey.
\newblock Timed automata with observers under energy constraints.
\newblock In Karl~Henrik Johansson and Wang Yi, editors, {\em HSCC}, pages
  61--70. ACM, 2010.

\bibitem{DBLP:conf/formats/BouyerFLMS08}
Patricia Bouyer, Ulrich Fahrenberg, Kim~Guldstrand Larsen, Nicolas Markey, and
  Jir\'{\i} Srba.
\newblock Infinite runs in weighted timed automata with energy constraints.
\newblock In Franck Cassez and Claude Jard, editors, {\em FORMATS}, volume 5215
  of {\em LNCS}, pages 33--47. Springer, 2008.

\bibitem{DBLP:journals/pe/BouyerLM14}
Patricia Bouyer, Kim~G. Larsen, and Nicolas Markey.
\newblock Lower-bound-constrained runs in weighted timed automata.
\newblock {\em Perform. Eval.}, 73:91--109, 2014.

\bibitem{Brand:1983:CFM:322374.322380}
Daniel Brand and Pitro Zafiropulo.
\newblock On communicating finite-state machines.
\newblock {\em J. ACM}, 30(2):323--342, April 1983.

\bibitem{DBLP:conf/icalp/BrazdilJK10}
Tom{\'a}s Br{\'a}zdil, Petr Jan\v{c}ar, and Anton\'{\i}n Kucera.
\newblock Reachability games on extended vector addition systems with states.
\newblock In Samson Abramsky, Cyril Gavoille, Claude Kirchner, Friedhelm {Meyer
  auf der Heide}, and Paul~G. Spirakis, editors, {\em ICALP (2)}, volume 6199
  of {\em LNCS}, pages 478--489. Springer, 2010.
  
  
  
  \bibitem{DBLP:conf/concur/ChadhaV07}
  Rohit Chadha and Mahesh Viswanathan.
  \newblock Decidability Results for Well-Structured Transition Systems with Auxiliary Storage.
  \newblock In Lu{\'{\i}}s Caires and Vasco Thudichum Vasconcelos, editors, {\em CONCUR}, volume 4703 of {\em LNCS}, pages 136--150, Springer, 2007.

\bibitem{DBLP:journals/tcs/Dang03}
Zhe Dang.
\newblock Pushdown timed automata: a binary reachability characterization and
  safety verification.
\newblock {\em Theor. Comput. Sci.}, 302(1-3):93--121, 2003.

\bibitem{DBLP:journals/tocl/DemriL09}
St{\'e}phane Demri and Ranko Lazi\'{c}.
\newblock {LTL} with the freeze quantifier and register automata.
\newblock {\em ACM Trans. Comput. Log.}, 10(3), 2009.


\bibitem{DLS-tcs10}
St{\'e}phane Demri, Ranko Lazi{\'c}, and Arnaud Sangnier.
\newblock Model checking memoryful linear-time logics over one-counter
  automata.
\newblock {\em Theor.~Comput.~Sci.}, 411(22-24):2298--2316, 2010.  
  
  
\bibitem{Dickson1913}
Leonard~Eugene Dickson.
\newblock Finiteness of the odd perfect and primitive abundant numbers with n
  distinct prime factors.
\newblock {\em Amer. J. Math.}, 35:413--422, 1913.

\bibitem{EmmiM06}
Michael Emmi and Rupak Majumdar.
\newblock Decision problems for the verification of real-time software.
\newblock In J.~P. Hespanha and A.~Tiwari, editors, {\em HSCC}, volume 3927 of
  {\em LNCS}, pages 200--211. Springer, 2006.

\bibitem{DBLP:conf/atva/EsikFLQ13}
Zolt{\'a}n {\'E}sik, Uli Fahrenberg, Axel Legay, and Karin Quaas.
\newblock Kleene algebras and semimodules for energy problems.
\newblock In Dang~Van Hung and Mizuhito Ogawa, editors, {\em ATVA}, volume 8172
  of {\em LNCS}, pages 102--117. Springer, 2013.

\bibitem{DBLP:conf/ictac/FahrenbergJLS11}
Uli Fahrenberg, Line Juhl, Kim~G. Larsen, and Jir\'{\i} Srba.
\newblock Energy games in multiweighted automata.
\newblock In Antonio Cerone and Pekka Pihlajasaari, editors, {\em ICTAC},
  volume 6916 of {\em LNCS}, pages 95--115. Springer, 2011.

\bibitem{DBLP:journals/jacm/Greibach69}
Sheila~A. Greibach.
\newblock An infinite hierarchy of context-free languages.
\newblock {\em J. ACM}, 16(1):91--106, 1969.

\bibitem{DBLP:conf/lics/HenzingerNSY92}
Thomas~A. Henzinger, Xavier Nicollin, Joseph Sifakis, and Sergio Yovine.
\newblock Symbolic model checking for real-time systems.
\newblock In {\em LICS}, pages 394--406. IEEE Computer Society, 1992.

\bibitem{Higman1952}
Graham Higman.
\newblock Ordering by divisibility in abstract algebras.
\newblock {\em Proceedings of the London Mathematical Society}, 2:236--366,
  1952.

\bibitem{DBLP:conf/fsttcs/HofmanLMT13}
Piotr Hofman, Slawomir Lasota, Richard Mayr, and Patrick Totzke.
\newblock Simulation over one-counter nets is {PSPACE}-complete.
\newblock In Anil Seth and Nisheeth~K. Vishnoi, editors, {\em FSTTCS},
  volume~24 of {\em LIPIcs}, pages 515--526. Schloss Dagstuhl - Leibniz-Zentrum
  fuer Informatik, 2013.

\bibitem{DBLP:conf/rp/HofmanT14}
Piotr Hofman and Patrick Totzke.
\newblock Trace inclusion for one-counter nets revisited.
\newblock In Jo{\"{e}}l Ouaknine and Igor Potapov and James Worrell, editors, {\em {RP}}, volume 8762 of {\em LNCS}, pages 151--162. Springer, 2014. 

\bibitem{DBLP:journals/mst/Ibarra79}
Oscar~H. Ibarra.
\newblock Restricted one-counter machines with undecidable universe problems.
\newblock {\em Mathematical Systems Theory}, 13:181--186, 1979.

\bibitem{DBLP:journals/jcss/JanarEM99}
Petr Jan\v{c}ar, Javier Esparza, and Faron Moller.
\newblock Petri nets and regular processes.
\newblock {\em J. Comput. Syst. Sci.}, 59(3):476--503, 1999.

\bibitem{DBLP:journals/iandc/JancarKMS04}
Petr Jan\v{c}ar, Anton\'{\i}n Kucera, Faron Moller, and Zdenek Sawa.
\newblock {DP} lower bounds for equivalence-checking and model-checking of
  one-counter automata.
\newblock {\em Inf. Comput.}, 188(1):1--19, 2004.

\bibitem{DBLP:conf/birthday/JuhlLR13}
Line Juhl, Kim~Guldstrand Larsen, and Jean-Fran\c{c}ois Raskin.
\newblock Optimal bounds for multiweighted and parametrised energy games.
\newblock In Zhiming Liu, Jim Woodcock, and Huibiao Zhu, editors, {\em Theories
  of Programming and Formal Methods}, volume 8051 of {\em LNCS}, pages
  244--255. Springer, 2013.

\bibitem{DBLP:journals/tocl/LasotaW08}
Slawomir Lasota and Igor Walukiewicz.
\newblock Alternating timed automata.
\newblock {\em ACM Trans. Comput. Log.}, 9(2), 2008.

\bibitem{TCP}
Information Sciences~Institute of~the University~of Southern~California.
\newblock \emph{Transmission Control Protocoll} ({DARPA} {I}nternet {P}rogram
  {P}rotocol {S}pecification), 1981.
\newblock http://www.faqs.org/rfcs/rfc793.html.

\bibitem{DBLP:conf/lics/OuaknineW04}
Jo{\"e}l Ouaknine and James Worrell.
\newblock On the language inclusion problem for timed automata: Closing a
  decidability gap.
\newblock In {\em LICS}, pages 54--63. IEEE Computer Society, 2004.

\bibitem{DBLP:conf/lics/OuaknineW05}
Jo{\"e}l Ouaknine and James Worrell.
\newblock On the decidability of metric temporal logic.
\newblock In {\em LICS}, pages 188--197. IEEE Computer Society, 2005.

\bibitem{DBLP:conf/fossacs/OuaknineW06}
Jo{\"e}l Ouaknine and James Worrell.
\newblock On metric temporal logic and faulty turing machines.
\newblock In Luca Aceto and Anna Ing{\'o}lfsd{\'o}ttir, editors, {\em FoSSaCS},
  volume 3921 of {\em LNCS}, pages 217--230. Springer, 2006.

\bibitem{DBLP:conf/lata/Quaas11}
Karin Quaas.
\newblock On the interval-bound problem for weighted timed automata.
\newblock In Adrian~Horia Dediu, Shunsuke Inenaga, and Carlos Mart\'{\i}n-Vide,
  editors, {\em LATA}, volume 6638 of {\em LNCS}, pages 452--464. Springer,
  2011.

\bibitem{DBLP:conf/lata/Quaas13}
Karin Quaas.
\newblock Model checking metric temporal logic over automata with one counter.
\newblock In Adrian~Horia Dediu, Carlos Mart\'{\i}n-Vide, and Bianca Truthe,
  editors, {\em LATA}, volume 7810 of {\em LNCS}, pages 468--479. Springer,
  2013.

\bibitem{DBLP:journals/corr/Quaas14a}
Karin Quaas.
\newblock {MTL}-model checking of one-clock parametric timed automata is
  undecidable.
\newblock In {\'E}tienne Andr{\'e} and Goran Frehse, editors, {\em SynCoP},
  volume 145 of {\em EPTCS}, pages 5--17, 2014.
  
  \bibitem{DBLP:journals/ipl/Schnoebelen02}
  Philippe Schnoebelen.
\newblock Verifying lossy channel systems has nonprimitive recursive complexity.
\newblock In {Inf. Process. Lett.},
  83(5), pages 251--261, 2002.
  
  
  

\bibitem{DBLP:conf/cav/Cerans92}
K\={a}rlis \v{C}er\={a}ns.
\newblock Decidability of bisimulation equivalences for parallel timer
  processes.
\newblock In Gregor von Bochmann and David~K. Probst, editors, {\em CAV},
  volume 663 of {\em LNCS}, pages 302--315. Springer, 1992.


\end{thebibliography}
\end{document}